\newtheorem{theorem}{Theorem}[section]
\newtheorem{lemma}[theorem]{Lemma}
\newcommand{\ket}[1]{\left| {#1} \right\rangle}
\newcommand{\bra}[1]{\left\langle {#1} \right|}
\newcommand{\proj}[1]{\mathinner{|{#1}\rangle}\!\!\mathinner{\langle{#1}|}}
\newcommand{\ketbra}[2]{\mathinner{|{#1}\rangle}\!\!\mathinner{\langle{#2}|}}
\newcommand{\tr}[1]{\mbox{$\mathrm{Tr}\left(#1\right)$}}
\newcommand{\Id}{\mathds{1}}
\newcommand{\ii}{\mathrm{i}}
\newcommand{\cB}{\mathcal{B}}
\newcommand{\cE}{\mathcal{E}}
\newcommand{\cF}{\mathcal{F}}
\newcommand{\cH}{\mathcal{H}}
\newcommand{\sA}{\mathscr{A}}
\newcommand{\sB}{\mathscr{B}}
\newcommand{\dC}{\mathds{C}}
\renewcommand{\t}[1]{\mathrm{#1}}
\newcommand{\be}{\begin{equation}}
\newcommand{\ee}{\end{equation}}
\begin{document}

\title{Noise-resistant device-independent certification of Bell state measurements}

\author{Jean-Daniel Bancal}
%\affiliation{Quantum Optics Theory Group, Universit\"at Basel, Klingelbergstrasse 82 CH-4056 Basel, Switzerland}
\author{Nicolas Sangouard}
%\affiliation{Quantum Optics Theory Group, Universit\"at Basel, Klingelbergstrasse 82 CH-4056 Basel, Switzerland}
\author{Pavel Sekatski}
\affiliation{Quantum Optics Theory Group, Universit\"at Basel, Klingelbergstrasse 82, CH-4056 Basel, Switzerland}
\date{\today}

\begin{abstract}
Device-independent certification refers to the characterization of an apparatus without reference to the internal description of other devices. It is a trustworthy certification method, free of assumption on the underlying Hilbert space dimension and on calibration methods. We show how it can be used to quantify the quality of a Bell state measurement, whether deterministic, partial or probabilistic. Our certification is noise resistant and opens the way towards the device-independent self-testing of Bell state measurements in existing experiments.
\end{abstract}

\maketitle

\paragraph{Introduction--} In quantum theory, the relation between the state of a physical system and the information that can be obtained about it upon measurement is non-trivial: It is the combination of the system's state and measurement that determines the probability of measurement outcomes. Thereby, the measurement acquires an irreducible status on equal footing with the system's state.\\

In an actual experiment, an observer can reconstruct these probabilities by repeatedly performing the same measurements on copies of the same state. Yet, even under the assumption that all experimental runs are identical, backtracking the quantum description of the state and measurements from the observed probabilities is not straightforward and does not necessarily lead to a unique solution. This difficulty is usually circumvented by assuming a quantum description either of the measurement, e.g. for state tomography, or of the state, e.g. for detector characterization. Such assumptions typically rely on the best understanding of the physics of the measured system available to day -- a physical model consistent with the history of previous experiments performed on the same or similar setups. Nonetheless, this does not protect against interpretation bias, which can lead to erroneous conclusions, e.g. about the presence of quantum features~\cite{Rosset12} or the security of quantum communication protocols~\cite{Acin06}. In contrast with this conventional approach, there is a program aiming to reconstruct as much as possible of a quantum description of the setup without assumptions about the internal functioning of any of the involved devices.\\

The first and most famous of these device-independent assessments concerns Bell nonlocality~\cite{Bell64}. While being a major breakthrough in modern physics, Bell nonlocality is a weak form of device-independent certification. It merely states that the devices admit no locally causal description. Today, it is known that more precise statements can be made: the presence of entanglement~\cite{Bancal11}, randomness~\cite{Colbeck07, Pironio11} and shared secrecy~\cite{Ekert91, Acin07} can all be guaranteed in a device-independent manner. Furthermore, we now know that it is sometimes possible to get a full description of a state (up to irreducible equivalence relations) while remaining in a device independent context~\cite{tsirelson_80, popescu_92, braunstein_92, mayers_04, jed_16, coladangelo_17, Supic_17}.  This task came to be known as self-testing and does not only apply to the quantum description of states, but also to measurements~\cite{mayers_04, Yang_14, Bamps_15, Kaniewski_17} and operations, relevant for example for quantum computation~\cite{magniez_06, reichardt_13, hajduvsek_15, fitzsimons_18, coladangelo_17_bis, Sekatski_18}.\\

In this letter, we focus on the self-testing of joint measurements. In particular, we are interested in Bell state measurements which enable entanglement swapping and are thus a key ingredient in long-distance quantum communication based on quantum repeaters~\cite{Sangouard11}. To reach the point of implementing a quantum repeater, it is crucial to proceed in a scalable way and certify that each new component is qualified, independently of the purpose for which that larger device could be used. We here make a step along this line by showing how to certify the quality of Bell state measurements device-independently.\\

A few years ago, a first non-robust device-independent protocol was proposed to demonstrate that a measurement device is entangled, that is, has entangled eigenstates~\cite{Rabelo11}. Two results later showed that the entangled quality of a joint measurement can be certified in presence of some amount of noise~\cite{Bancal_15, Wan_17}. However, none of these results certify the quality of Bell state measurements since joint measurements that differ significantly from a Bell state measurement can also be entangled. Here, following the robustness definition given in~\cite{Sekatski_18}, we self-test the quality of Bell state measurements. Our certificate not only guarantees that a measurement is entangled, but also that it performs close to a perfect projection on maximally two-qubit entangled states. Our technique applies to both deterministic and probabilistic Bell state measurements. It is also noise-tolerant, hence opening a way towards the first experimental self-testing of Bell state measurements.\\

%%%%%%%%%%%%%%%%%%%%%%%%%%%%%%%%%%%%%%%%%%%%%%%
\paragraph{General description of a local measurement on a bi-partite state--} In the quantum formalism, a measurement $\mathcal{M}$ with $n$ outcomes is described by a collection of $n$ completely positive maps such that their sum is trace preserving. In our case, we are not interested in characterizing the posterior state of the measured subsystem, and a simpler description in terms of positive-operator valued measure (POVM) elements is possible. A POVM over a Hilbert space $\cH_\sA$ is a collection of positive semi-definite operators $\{ E_k \succeq 0  \}_{k=0}^{n-1} \in L(\cH_\sA)$ satisfying $\sum_k E_k = \Id$. Given the state of a system $\rho\in L(\cH_\sA\otimes \cH_\sB)$ shared between two sides $\sA$  and $\sB,$ the probability for $\sA$ to observe an outcome $k$ is given by the Born rule $p_k=\tr {E_k \otimes \Id_\sB \, \rho}$. The post measured state of $\sB$ is 
\be
\varrho_{k}=\frac{1}{p_k} \t{Tr}_\sA \left( E_k \otimes\Id_\sB \, \rho\right).
\ee
Hence, the measurement operation corresponding to outcome `k' can be described as a completely positive map
\begin{align}\label{eq: M branch}
M_k : &L(\cH_\sA\otimes\cH_\sB) \to L(\cH_\sB)\\ \nonumber
 & \rho \mapsto \t{Tr}_\sA \left(E_k\otimes\Id_\sB \, \rho\right).
\end{align}
Adding the output label `k' in a classical system with $n$ possible states, which we represent in a `n'-dimensional Hilbert space $\cH_L$, allows one to describe the whole measurement $\mathcal{M}$ in the succinct form
\begin{align}\label{eq: M full}
\mathcal{M}: \,&L(\cH_\sA\otimes\cH_\sB) \to L(\cH_\sB \otimes \cH_L)\\ \nonumber
&\rho \to \sum_k M_k(\rho) \otimes \proj{k}.
\end{align}\\

%%%%%%%%%%%%%%%%%%%%%%%%%%%%%%%%%%%%%%%%%%%%%%%
\paragraph{Description of an ideal Bell state measurement: The reference--}
A Bell state measurement (BSM) is a measurement performed jointly on two qubit subsystems. Therefore, it acts on the tensor-structured Hilbert space $\cH_\sA=\mathds{C}^2\otimes \mathds{C}^2$. Its four POVM elements are projectors onto the four Bell states:
\begin{align}\label{eq: def BSM}
&\overline{\mathcal{M}}: \rho \mapsto \sum_{k=0}^3 \text{Tr}_\sA\left( \rho \, \proj{\phi_{\bf k}}\otimes \Id \right) \otimes \proj{k}\\ 
\nonumber
&\text{with} \quad \ket{\phi_{j\ell}}=(\sigma_Z)^j\otimes (\sigma_X)^\ell\frac{1}{\sqrt{2}} \left( \ket{0, 0} + \ket{1,1} \right),
\end{align}
where $\sigma_X$ and $\sigma_Z$ are Pauli matrices. Here, we write ${\bf k}=j\ell$ with $j,\ell=0,1$ as the binary representation of $k=0,\dots, 3$, that is $\ket{\phi_{\bf k}} \equiv \ket{\phi_{j\ell}}.$\\

%%%%%%%%%%%%%%%%%%%%%%%%%%%%%%%%%%%%%
\begin{figure}
\begin{center}
\includegraphics[width=0.98\columnwidth]{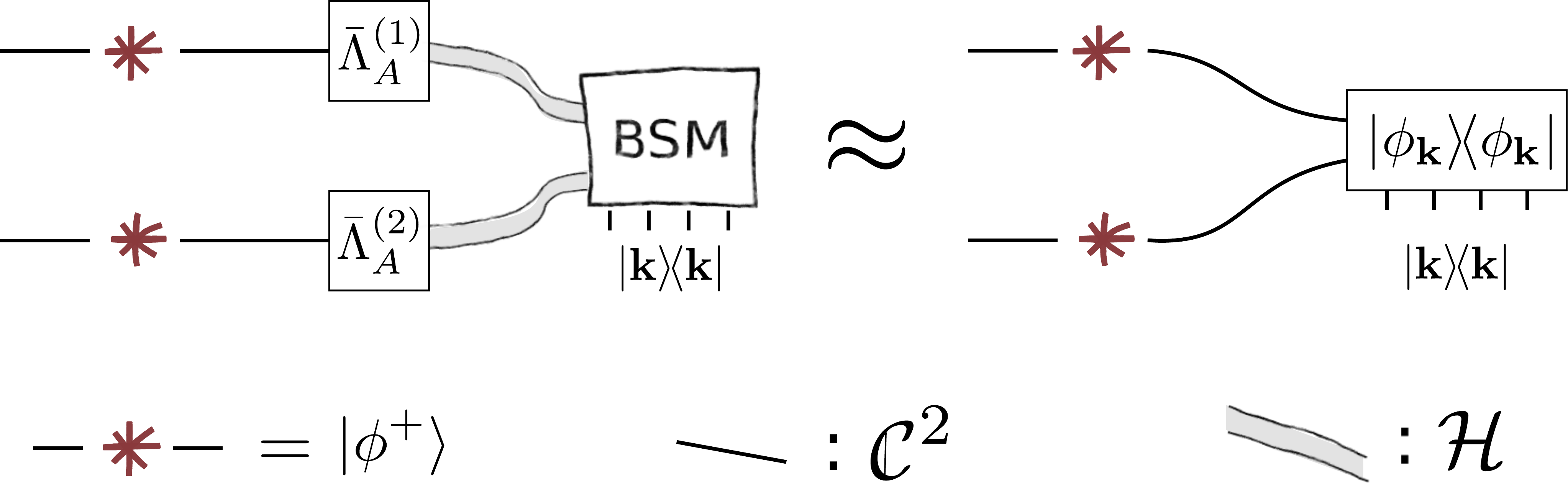}
\caption{To quantify the quality of an unknown device supposedly performing a Bell state measurement (box BSM on the left), the action of this black box supplemented with injection maps $\Lambda_A^{(1)} \otimes \Lambda_A^{(2)}$ is compared with the action of an ideal Bell state measurement (on the right) on two halves of maximally entangled two-qubit states $|\phi^+\rangle \equiv |\phi_{00}\rangle$. When the physical state admits a tensor structure across the input of the BSM, i.e. $\rho=\rho^{(1)}\otimes\rho^{(2)}$, the injection maps can incorporate Alice's marginal states, i.e $\bar \Lambda_A^{(i)}[\tau] = \Lambda_A^{(i)}[\rho_A^{(i)}\otimes \tau]$ as shown here. The red stars represent each a source producing a maximally entangled two-qubit state $\ket{\phi_{00}}$. The thin black lines correspond each to a qubit whereas the bold grey lines are associated each with a Hilbert space of unknown dimension.}
\label{fig: 1}
\end{center}
\end{figure}

\paragraph{Device-independent certification of a Bell state measurement: Formulation--} In practice, a measurement box $\mathcal{M}$ performing a BSM does not act on two qubits, but on two systems that can be well identified, e.g. two photons coming from two different optical fibres, but which live in Hilbert spaces of a larger (unknown) dimension $\cH_{A}^{(1)}$ and $\cH_{A}^{(2)}$. Hence, the total Hilbert space is $\cH_\sA=\cH_{A}^{(1)}\otimes \cH_{A}^{(2)}$. Such a measurement box $\mathcal{M}$ is certified to be a BSM if there exist local CPTP injection maps 
\be
\Lambda_A^{(i)}:  L(\cH_A^{(i)}\otimes\mathds{C}^2)\to L(\cH_A^{(i)}), \-\ i=1,2
\ee
such that 
\begin{align}\label{eq: M ideal}
\mathcal{M}\circ(\Lambda_A^{(1)}\otimes \Lambda_A^{(2)})[\rho_\sA \otimes \tau]
=\overline{\mathcal{M}}[\tau]
\end{align}
for all state $\tau$ on the extended Hilbert space $\dC^{2}\otimes\dC^{2}\otimes \cH_\sB$. Here, $\rho_\sA = \text{Tr}_{\sB}(\rho)$ is the partial state of $\sA$ on which $\mathcal{M}$ acts, $\rho$ being the state of the source used in practice, %$\rho$ can have a tensor product form $\rho=\rho_1\otimes\rho_2$ with respect to the inputs 1 and 2. We say that independent sources are used and in this case, the states $\rho_{\sA}^i=\text{Tr}_{\sB}(\rho_i)$ can be included in the injection maps $\bar \Lambda_A^{(1)}\otimes \bar \Lambda_A^{(2)}[\tau] \equiv \Lambda_A^{(1)}\otimes\Lambda_A^{(2)}[\rho_{\sA}^1 \otimes \rho_{\sA}^2 \otimes \tau].$}
and $\overline{\mathcal{M}}$ the reference measurement defined in Eq.~\eqref{eq: def BSM}. For simplicity, the identity operator on side $\sB$ is not written in Eq.~\eqref{eq: M ideal}. The certificate in Eq.~\eqref{eq: M ideal} shows that the device $\mathcal{M}$ performs a perfect Bell state measurement on two qubits. In order to be used in practice on two qubits, the qubits first have to undergo appropriated local injection maps.\\ %before being inserted in the two input ports of $\mathcal{M}.$

No real-world device operates ideally. Hence, a sensible notion of certification has to cover non-ideal cases. Following Ref.~\cite{Sekatski_18}, we consider the Choi fidelity $\cF(\mathcal{M},\overline{\mathcal{M}}),$ that is, the Uhlmann fidelity between the states obtained by acting on half of a maximally entangled state with either the actual measurement $\mathcal{M}$ combined with the injection maps $\Lambda_A^{(i)}$ or simply the ideal measurement $\overline{\mathcal{M}}$ (see Fig.~\ref{fig: 1}). This quantifies the deviation of $\mathcal{M}$ from the ideal case. In the case of a complete and deterministic Bell state measurement, i.e. a measurement distinguishing the four Bell states with unit efficiency, this fidelity takes the form
\begin{align}\label{eq:F BSM}
\cF(\mathcal{M},\overline{\mathcal{M}}) = \underset{\Lambda_A^{(1)}, \Lambda_A^{(2)}}{\max}\ F\Big(&\mathcal{M}\circ(\Lambda_A^{(1)}\otimes \Lambda_A^{(2)})\left[\rho_\sA \otimes \ket{\phi_{00}}^{\otimes 2} \right], \nonumber\\
&\frac{1}{4} \sum_{k} \proj{\phi_{\bf k}} \otimes \proj{k}\Big)
\end{align}
where $F(\rho,\sigma)=\tr{\!\sqrt{\!\!\sqrt{\rho}\,\sigma\!\sqrt{\rho}}}$ is the Uhlmann fidelity between two states $\rho$ and $\sigma$.\\

%%%%%%%%%%%%%%%%%%%%%%%%%%%%%%%%%%%%%%%%%%%%%%%
\paragraph{Device-independent certification of a complete and deterministic Bell state measurement: Recipe--} In this section, we show how the quantity given in Eq.~\eqref{eq:F BSM} can be bounded experimentally. The basic recipe uses two steps as depicted in Fig.~\ref{fig: 2}. A source is used to prepare a four-partite state, ideally two pairs of two-qubit maximally entangled states. In the first step, two of these subsystems are sent to party $\sA$ while the two remaining ones are distributed to additional parties $B^{(1)}$ and $B^{(2)}$ who measure them locally. The aim of the first step is to characterize the post-measured state shared between $B^{(1)}$ and $B^{(2)}.$ In the second step, the same source is characterized independently via local measurements performed by the four parties $A^{(1)}$, $A^{(2)}$, $B^{(1)}$ and $B^{(2)}$.\\

Before detailing the steps I and II, we specify the notations. We denote by $\rho \in L(\cH_A\otimes \cH_B)$ the state produced by the source, where $\cH_\sA=\cH_A^{(1)}\otimes \cH_A^{(2)}$ and $\cH_\sB=\cH_B^{(1)}\otimes \cH_B^{(2)}$ are the Hilbert spaces on each side, and the tested joint measurement $\mathcal{M}$ is of the form \eqref{eq: M full}. The post-measured state of parties $B^{(1)}$ and $B^{(2)}$ on side $\sB$ conditioned on outcome $k$ is given by 
\be
\varrho_k = \frac{ M_k[ \rho ]}{\t{Tr} \,M_k[ \rho ]} \in L(\cH_B^{(1)}\otimes \cH_B^{(2)}),
\ee
and occurs with a probability $p_k =\t{Tr} \,M_k[ \rho ]$.\\

Let us now focus on step I. In the ideal case, where maximally entangled two qubit states are produced by the source and the joint measurement is a Bell state measurement, the state of $B^{(1)}$ and  $B^{(2)}$ is projected onto one of the four Bell states depending on the outcome $k.$ For each $k$, there exist measurements for $B^{(1)}$ and $B^{(2)}$ with two possible settings $y_{1(2)}=0$ or $1$ and two possible outputs  $b_{1(2)}=0$ or $1$ such that the Clauser, Horne, Shimony, and Holt (CHSH) value~\cite{CHSH69}
\be
\beta_k=\sum_{b_1,b_2,y_1,y_2} (-1)^{b_1+b_2+y_1\cdot y_2} p(b_1b_2| y_1 y_2k)
\ee
attains its maximum quantum value $2\sqrt{2}$ with a simple relabeling of settings and outputs. For example, $2\sqrt{2}$ is obtained with the state $\ket{\phi_{00}}$ when $B^{(1)}$ and $B^{(2)}$ use the measurements $\{\sigma_Z, \sigma_X \}$ and $\{\frac{\sigma_X+\sigma_Z}{\sqrt{2}},\frac{\sigma_X-\sigma_Z}{\sqrt{2}} \}$ respectively. It is straightforward to verify that the exchange of the outputs of the first measurement $\sigma_X$ of $B^{(1)}$ and the exchange of the roles of $B^{(2)}$'s measurements correspond to local unitary transformations, respectively $\sigma_Z$ and $\sigma_X$, applied on the respective sides. Hence, the same measurements can be used, with some relabelling of settings and outputs, to violate maximally the CHSH inequality with the states $\Id\otimes\sigma_X \ket{\phi_{00}}\equiv \ket{\phi_{01}},$ $\sigma_Z\otimes\Id \ket{\phi_{00}}\equiv \ket{\phi_{10}}$ and $\sigma_Z\otimes\sigma_X \ket{\phi_{00}}\equiv \ket{\phi_{11}}$, which are precisely the three Bell states heralded by outcome $k=1,2$ and $3$ respectively.\\

Using the results of Ref.~\cite{jed_16}, we show in the Appendix A that given the four values $\beta_k$ obtained for the four versions of CHSH, there exist local extraction maps $\Lambda_B^{(1,2)}:L(\cH_B^{(1,2)})\to \dC^2$  such that 
\begin{equation}\label{eq: F CHSH}
F\left( (\Lambda_B^{(1)} \otimes \Lambda_B^{(2)})[\varrho_k], \ket{\phi_{\bf k}}\right) \geq F^o_{k} = \sqrt{1 -\frac{1}{2}\cdot\frac{2\sqrt{2}-\beta_k}{2\sqrt 2 -\beta^*}},
\end{equation}
where $\beta^*=\frac{2(8 + 7\sqrt{2})}{17} \approx 2.11$. These isometries can be explicitly defined from the quantum description of the measurements performed by $B^{(1)}$ and $B^{(2)}$ respectively, and importantly they are the same for the four Bell states.\\

\begin{figure}
\begin{center}
\includegraphics[width=0.99\columnwidth]{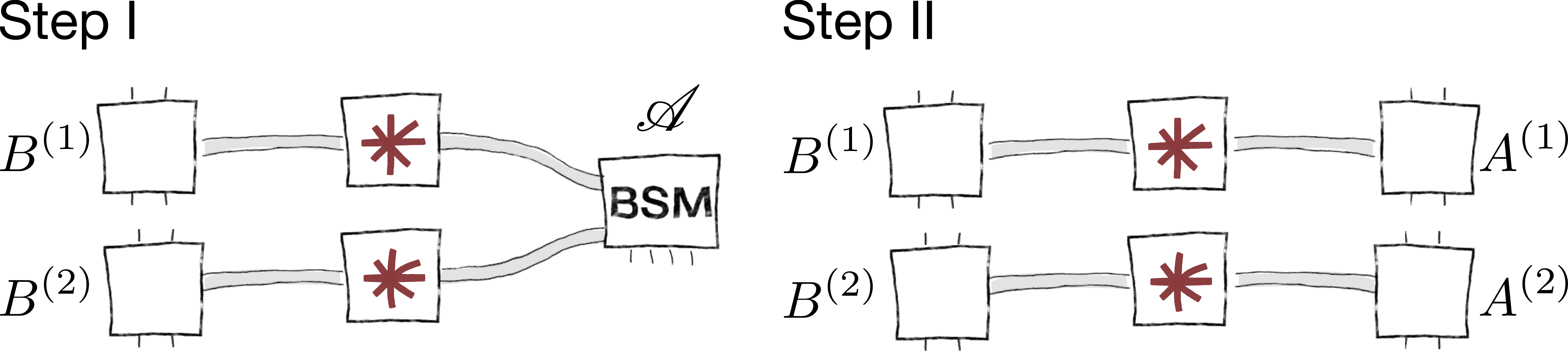}
\caption{Recipe for the device-independent certification of a Bell state measurement with two independent sources. Two independent sources prepare ideally two-qubit entangled states. In the first stage (step I) half of these states are sent to $\sA$ performing the BSM to be certified. The two remaining systems are sent to $B^{(1)}$ and $B^{(2)}$, performing a Bell test in order to certify their state conditional on the outcome of the BSM. In the optional second stage (step II), the states produced by the sources are characterized independently via a four-partite Bell test. The parties on side $\sB$ use the same measurements for both stages of the experiment. Note that when certifying a deterministic BSM with both step I \& II, the independence of the sources is not required.}
\label{fig: 2}
\end{center}
\end{figure}

Just like in the definition of $\mathcal{M}$, the output of Alice can be included as a label in a global post-measurement state
\be
\varrho = \sum_k p_k\,\varrho_k\otimes \proj{k},
\ee
where $p_k$ are the probabilities of each outcome observed in the experiment. Using the same isometries $\Lambda_B^{(1,2)}$, the certificates for the four branches $\varrho_k$ of Eq.~\eqref{eq: F CHSH} can thus be combined into a single certificate for the global output state
\begin{align}
\nonumber
& F\left((\Lambda_B^{(1)} \otimes \Lambda_B^{(2)}\otimes \Id_{\cH_L})[\varrho] , \sum_k\frac{1}{4} \proj{\phi_{\bf k}}\otimes \proj{k}\right) \\ 
\label{eq: Fo}
& \geq F^o = \sum_k \sqrt{\frac{p_k}{4}} \,F^o_k.
\end{align}
Note that we here used the orthogonality of states $\ket{k}.$ This relation is not yet enough to certify the measurement box $\mathcal{M}$ itself. In particular, it does not give the form of maps $\Lambda_A^{(i)}$ that are required in Eq.~\eqref{eq:F BSM}, see Fig.~\ref{fig: 1}. The most straightforward way to do so is to add in a careful analysis $F^o$ with the data collected in step II, as we detail now.\\

Step II refers to a characterization of the state created by the source. Using the Bell inequality\footnote{$B_\varphi$ is a function of expectation values of outcomes given the inputs such that Eq. (14) is a true Bell inequality. We invite the reader to look at Ref.~\cite{Sekatski_18} for the details of this Bell inequality suited e.g. for certifying a tensor product of two singlets when $\varphi=0.$} proposed in Ref.~\cite{Sekatski_18} 
\begin{equation}
B_\varphi \leq \frac{(\sqrt{2}+1)(\cos\varphi+\sin\varphi)+2}{5\sqrt{2}}
\end{equation}
with $\varphi=0$, the fidelity of the input state $\rho$ with the maximally entangled state $\ket{\phi_{00}}^{\otimes 2}$ can be bounded as
\begin{equation}\label{eq: Fi}
F((\widetilde{\Lambda}_A \otimes \Lambda_B) [\rho],|\phi_{00}\rangle^{\otimes 2}) \geq F^i = \sqrt{\frac{1}{4}\left(1 + 3\frac{\delta-\delta^*}{1-\delta^*}\right)}.
\end{equation}
$\delta=\langle B_{\varphi=0}\rangle$ is the Bell inequality violation, $\delta^* \geq 0.744$, and the isometries have the product structure $\widetilde{\Lambda}_A = \widetilde{\Lambda}_A^{(1)} \otimes \widetilde{\Lambda}_A^{(2)}$, $\Lambda_B = \Lambda_B^{(1)} \otimes \Lambda_B^{(2)}.$ Importantly, this Bell test can be performed by keeping the same measurements at side $\sB$ as used in step I, such that the certificates for the sources come with the same extraction maps $\Lambda_B^{(1)}$ and $\Lambda_B^{(2)}$ as those certifying the post-measured state in Eq.~\eqref{eq: Fo}.
Under this condition, the two certificates for the states before and after the measurement $\mathcal{M}$ can be combined in a certificate of the BSM itself~\cite{Sekatski_18}, leading to 
\be
\label{fid_BSM}
\cF(\mathcal{M}, \overline{\mathcal{M}})\geq \cos\left(\arccos (F^o) + \arccos(F^i)\right).
\ee 
When $\beta_k=2\sqrt{2},$ that is, $F^o_k=1$ for the four values of $k,$ $p_k=1/4$ and $\delta=1$ is maximal, this bound guarantees that $\cF(\mathcal{M},\overline{\mathcal{M}})=1.$ In noisy scenarios, 
the fidelity that can be certified is shown in Fig. \ref{fig: 3}. The full line shows $\cF(\mathcal{M}, \overline{\mathcal{M}})$ as a function of $\beta_k=\beta\ \forall k$ in the case where the source delivers two-qubit maximally entangled states $(\delta=1)$ while the dashed line corresponds to $\delta=\beta_k/2\sqrt{2}$, i.e. the case where all measurements are perfect but the source is noisy. This shows the robustness of the bound given in Eq.~\eqref{fid_BSM} when applied to noisy Bell state measurements. The resistance to noise operating at the level of the source is not as strong since it impacts both $F^i$ and $F^o.$\\

\begin{figure}
\begin{center}
\includegraphics[width=1\columnwidth]{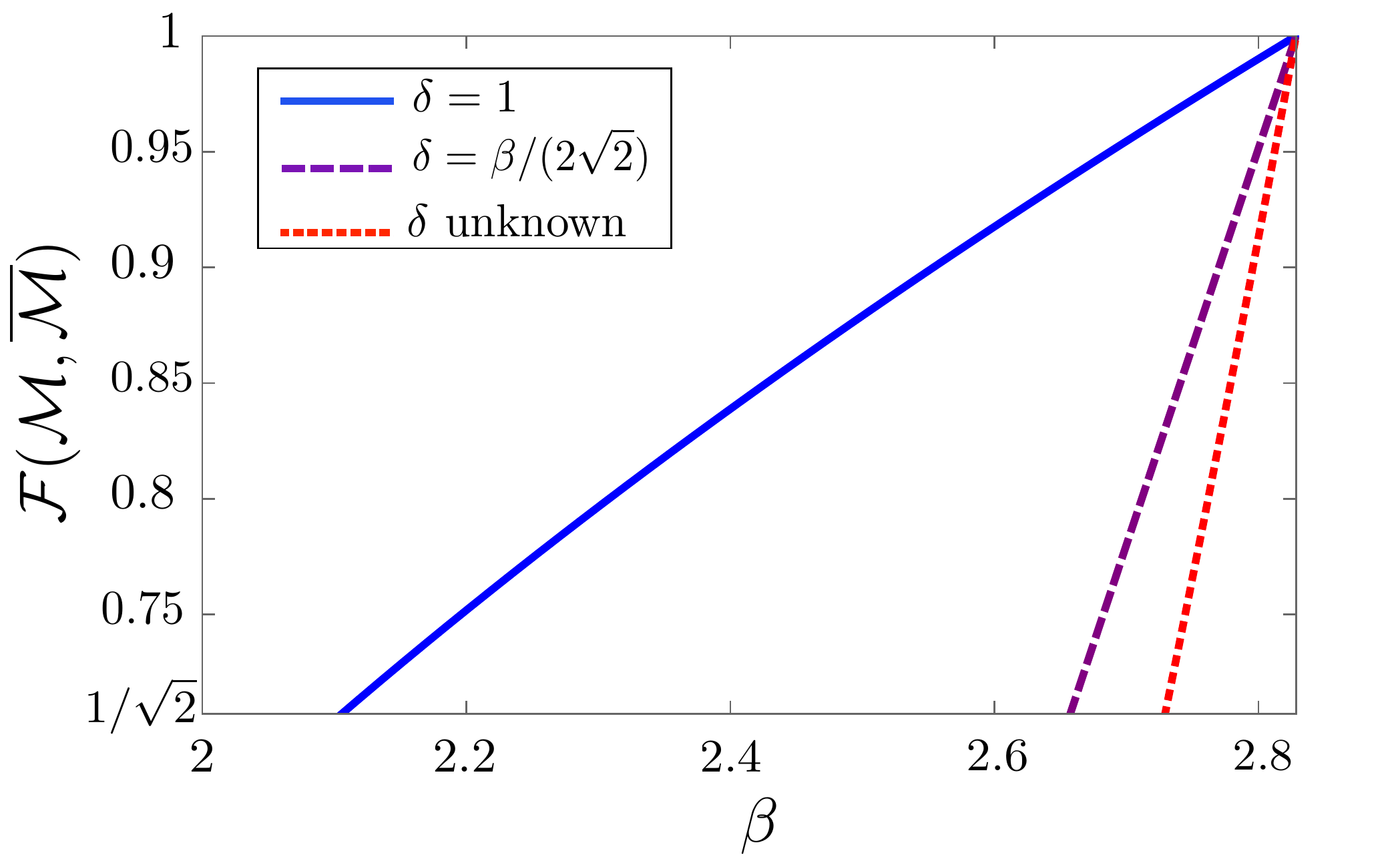}
\caption{Fidelity of a deterministic and complete Bell state measurement as a function of the CHSH value $\beta_k=\beta$ obtained between parties $B^{(1)}$ and $B^{(2)}$ when performing measurements on the post-measurement state (Step I). Here, we assume that all outcomes are equiprobable, i.e. $p_k=1/4$. The full (blue) line is obtained when the value of the 4-partite Bell expression observed in Step II is maximum $(\delta=1)$. The dashed (purple) line is obtained assuming $\delta=\beta/2\sqrt{2}.$ The dashed (red) line with smaller dashes relies on post-measurement statistics only, but requires that the measured state $\rho$ admit a tensor structure.}
\label{fig: 3}
\end{center}
\end{figure}

Note that it is possible to bound the fidelity of the Bell state measurement by taking into account only the CHSH values on the post measurement state, that is, without using information on $F^i.$ However, this requires the initial state to come from two independent sources, i.e. {\bf $\rho=\rho^1\otimes \rho^2$}. Indeed, if the source can produce the state $\rho = \sum_k {\ketbra{k}{k}}_{\sA} \otimes {\ketbra{\phi_{\bf k}}{\phi_{\bf k}}}_{\sB}$, all post-measurement statistics can be reproduced without any BSM. Under this assumption, we show in the Appendix B that 
\begin{align}
&\cF(\mathcal{M},\overline{\mathcal{M}})\geq
\nonumber
\\
& \, \cos\left[\arccos(F^o) +\arccos\left(\left( \sum_k p_k \left(F^o_k\right)^2\right)^2\right)\right].
\nonumber
\end{align}
Note that this bound is quite demanding. When $p_k=1/4$ and $\beta_k=\beta$, it requires $\beta$ higher than $\approx 2.73$ to certify that $\cF(\mathcal{M},\overline{\mathcal{M}}) > 1/\sqrt{2}$, see Fig.~\ref{fig: 3}.\\

%%%%%%%%%%%%%%%%%%%%%%%%%%%%%%%%%%%%%
\paragraph{Device-independent certification of a probabilistic/partial Bell state measurement--} In many situations, e.g. for measurement setups based on linear optics, the realization of a full Bell state measurement is neither reachable nor necessary. Instead, one is happy enough with a partial BSM where one outcome or a subset of the outcomes heralds the successful projection onto some of the four Bell states. Moreover, the projection need not always be successful (probabilitic BSM). In this case, the POVM corresponding to the successful outcomes is then of the form
\be \label{eq: aBSM povm}
\overline{E}_k=\zeta_k \proj{\phi_{\bf k}},
\ee
where the proportionality factors $\zeta_k$ is equal to the probability to obtain the outcome $k$ when the measured state is $\ket{\phi_{\bf k}}$. In practice, this probability can be small but what matters most is the quality of the corresponding projector.\\

Consider the outcome $k=0$ and associated completely positive map $M_0$. 
Analogously to the case before, with the help of local isometries the quality $\cF_\text{cond}(M_0, \overline{M_0})$ of the measurement $M_0$ with respect to the ideal projection described by $\overline{M}_0: \rho \mapsto \t{tr}_{\sA}\left(\rho \overline{E}_0 \otimes \Id_\sB \right)$ can be assessed conditioned on the successful behavior of $M_0$. A formal definition of $\cF_\text{cond}$ can be found in Appendix C, where we also show how it can be bounded from the Bell values $\beta_0$ and $\delta$  obtained in step I and step II respectively. For the relevant regime $(F^i)^2 + p_0 \geq 1,$ we obtain
\begin{align}\label{eq: F conditional}
\mathcal{F}_\text{cond} &\left(M_0, \overline{ M_0} \,\right)
\geq\\ &\cos\left(\arccos{(F^o_0)}
\nonumber
+\arccos \left(\sqrt{\frac{p_0 + (F^i)^2-1}{p_0}}\right)\right)
\nonumber
\end{align}
and
\begin{eqnarray}\label{eq: zeta0}
\zeta_0 \geq 4 \left(\sqrt{p_0 \left(F^i\right)^2} - \sqrt{(1-p_0)\left(1-\left(F^i\right)^2\right)}\right)^2,
\end{eqnarray}
where $F^o_0,$ $F^i$ are bounded by $\beta_0$ and $\delta$ via Eq.~\eqref{eq: F CHSH} and \eqref{eq: Fi} respectively. $p_0=\tr{M_0 [\rho]}$ is the experimentally observed  probability of obtaining the measurement outcome `$0$', and $\zeta_0/4$ is the success probability of the measurement $M_0$ preceded by the injection maps when applied on $\ket{\phi_{00}}^{\otimes 2}$. Detailed comments on these results are given in Appendix C.\\

%%%%%%%%%%%%%%%%%%%%%%%%%%%%%%%%%%%%%
\paragraph{Conclusion--} We have discussed the device independent certification of Bell state measurements. After proposing a formal definition of what such a certificate means for complete and deterministic Bell state measurements, we have provided a concrete recipe based on Bell tests leading to noise-tolerant certificates. The formulation and recipe has then been extended to partial and probabilistic Bell state measurements. These results could play an important role in the implementation of quantum networks by opening the way of a scalable approach consisting in certifying each building block in a device-independent way.\\

%%%%%%%%%%%%%%%%%%%%%%%%%%%%%%%%%%%%%
\begin{acknowledgments}
This work was supported by the Swiss National Science Foundation (SNSF), through the NCCR QSIT, Grant PP00P2-150579, PP00P2-179109 and 200021-175527. We also acknowledge the Army Research Laboratory Center for Distributed Quantum Information via the project SciNet.
\end{acknowledgments}

%%%%%%%%%%%%%%%%%%%%%%%%%%%%%%%%%%%%%
\paragraph{Note added--} While finishing this manuscript, we became aware of related work by Marc Olivier Renou et al. [ref].\\

\hrule
\ \\

%%%%%%%%%%%%%%%%%%%%%%%%%%%%%%%%%%%%%
\paragraph{Appendix A: Self-testing the four Bell states with the same measurements--} The aim of this appendix is to provide a detailed proof of the formula (10) of the main text. We start with a quick reminder of the results presented in Ref.~\cite{jed_16}. We then show that the four Bell states $\varrho_k$ can be self-tested with the same measurement boxes and local maps. We conclude with a proof for Eq. (10) of the main text.\\

Consider the qubit measurement operators $A_0(a), A_1(a), B_0(b)$ and $B_1(b)$ defined as
\be\label{eq:Jordan}
A_r(a) = \cos(a) \sigma_X+ (-1)^r \sin(a) \sigma_Z
\ee
for $a \in [0,\pi/2]$ (and the same for $B$), and the extraction map defined as
\begin{align}\label{eq:JED channel}
\Lambda_\lambda [\rho] &= \frac{1+g(\lambda)}{2} \rho +\frac{1-g(\lambda)}{2} \sigma_\lambda \rho\, \sigma_\lambda\\
&\text{with}\quad \sigma_ \lambda= 
\begin{cases} \sigma_X & \lambda\leq \pi/4\\
\sigma_Z & \lambda> \pi/4,
\end{cases}
\\  &g(\lambda)= (1+\sqrt{2})(\sin \lambda +\cos \lambda +1).
\end{align} 
Further consider the operator $W$ that yields the CHSH value $\beta$ obtained with the measurements defined above
\be
W_{a,b} = \sum_{r,t =0,1} (-1)^{rt} A_r(a) \otimes B_t(b)
\ee
It has been shown in~\cite{jed_16} that for $s=(4+5\sqrt{2})/16$, $\mu = -(1+2\sqrt{2})/4$ and the maximally entangled state
\be\label{eq:JED state}
\ket{\Psi}= \underbrace{- e^{\ii \frac{\pi}{8} \sigma_Y} \sigma_X}_U
\otimes \Id\ket{\phi_{00}},
\ee
the inequality
\be\label{eq:JED op}
(\Lambda_a\otimes \Lambda_b) \left[\proj{\Psi}\right] -  s \, W_{a,b} -\mu\, \Id\geq 0
\ee
holds for all values of $a$ and $b.$ This operator inequality is directly useful for self-testing. It implies in particular that for any two-qubit state $\rho$  
\begin{align}
\nonumber
\tr{\rho \, (\Lambda_a\otimes \Lambda_b) \left[\proj{\Psi}\right]} &\geq s \,\tr{\rho W_{a, b}} +\mu \\
\implies F^2\big((\Lambda_a \otimes \Lambda_b)[\rho],\ket{\Psi} \big) &\geq s \beta +\mu, \label{eq: F JED}
\end{align}
where we used the fact that $\Lambda_{\lambda}$ is self-dual. Moreover, using the Jordan lemma, the self-testing of states of arbitrary dimension can be reduced to the qubit case \cite{scarani_12}, and it can be shown that Ineq.~\eqref{eq: F JED} holds in general, without assumption on the Hilbert space assumption \cite{jed_16}.\\

We now prove that the four Bell states give the maximum quantum value of CHSH with the same measurement boxes upon relabeling the outputs of $A_0\to -A_0$ and/or exchanging the inputs of Bob, i.e. permuting $B_0$ and $B_1$. Let us call these two transformations of the CHSH observable $T_A$ and $T_B.$ We emphasize that they simply correspond to different post-processing of observed statistics. Using Eq.~\eqref{eq:Jordan}, we can compute how the Bell operator changes with the different post-processings:
\begin{align}
\nonumber
T_B(W_{a,b}) &= \Id\otimes U_B[W_{a,b}]= (\Id\otimes \underbrace{\sigma_X}_{=U_B}) \,W_{a,b}\, (\Id\otimes \sigma_X)\\
\nonumber
T_A(W_{a,b}) &= U_A\otimes\Id[W_{\frac{\pi}{2}-a,b}] \nonumber\\ 
\nonumber
&=(\underbrace{e^{-\ii \frac{\pi}{4}\sigma_Y }\sigma_X}_{=U_A} \otimes \Id) \,W_{\frac{\pi}{2}-a,b}\, (\sigma_X e^{\ii \frac{\pi}{4}\sigma_Y } \otimes \Id).
\end{align}
Moreover, we can show from Eq.~\eqref{eq:JED channel} that
\begin{align}
U_B\circ\Lambda_b &= \Lambda_b\circ U_B\\
U_A\circ\Lambda_a &= \Lambda_{\frac{\pi}{2}-a}\circ U_A.
\end{align}
Hence applying the unitary transformation $U_A$, $U_B$ and $U_A\otimes U_B$ to the inequality \eqref{eq:JED op} for $a$ and $a'=\frac{\pi}{2}-a$ yields
\begin{align}
\nonumber
(\Lambda_a\otimes \Lambda_b) &\left[\proj{\Psi_{01}}\right] - s \, T_B(W_{a,b}) -\mu\, \Id\geq 0\\
\nonumber
(\Lambda_{a}\otimes \Lambda_b)& \left[\proj{\Psi_{10}}\right] - s \, T_A(W_{a,b}) -\mu\, \Id\geq 0\\
\nonumber
(\Lambda_{a}\otimes \Lambda_b) &\left[\proj{\Psi_{11}}\right] - s \, T_A\otimes T_B(W_{a,b}) -\mu\, \Id\geq 0
\end{align}
with $\ket{\Psi_{j\ell}}= (U_A)^{j}\otimes (U_B)^\ell \ket{\Psi}$. Finally, using Eq.~\eqref{eq:JED state} and the equality $U^\dag U_A U =\sigma_Z,$ we obtain the desired result
\be
\ket{\Psi_{j\ell}} = (U\otimes \Id) (\sigma_Z^j\otimes \sigma_X^\ell) \ket{\phi_{00}}= (U\otimes \Id)   \ket{\phi_{j\ell}}.
\ee
Hence, all the four Bell states can be certified device-independently with the same measurement boxes and the same extraction maps. The fidelity of the extracted state with the corresponding Bell state is lower bounded by the same expression
\be
F^o_k = \sqrt{1 -\frac{1}{2}\cdot\frac{2\sqrt{2}-\beta_k}{2\sqrt 2 -\beta^*}},
\ee
where $\beta_k$ are values of different CHSH tests that are related by mere relabelings $T_A$ and $T_B$ of some inputs and some outputs of the measurement boxes and $\beta^*=\frac{2(8 + 7\sqrt{2})}{17} \approx 2.11$. This means that a non trivial bound $(F^o_k> 1/\sqrt{2})$ is obtained as soon as $\beta_k>\beta^*.$\\

%%%%%%%%%%%%%%%%%%%%%%%%%%%%%%%%%%%%%
\paragraph{Appendix B: Certification of a complete and deterministic Bell state measurement from only post-measurement statistics--} The goal of this appendix is to show how to certify a Bell state measurement when the quality of sources cannot be estimated. In particular, we show that the step I described in the main text is sufficient to establish a lower bound on the quality of a complete and deterministic Bell state measurement. As discussed in the main text, this requires assuming that two independent sources are used with respect to the inputs 1 and 2 of the Bell state measurement.\\

Let $\rho_i$ be the state of the source used as input $i=1,2$ of the Bell state measurement, that is $\rho=\rho_1 \otimes \rho_2.$ We further define  $\rho_i' \in L(\cH_A^{(i)}\otimes\dC^2)$ as the state produced by the source $i$ after the extraction maps introduced in Eq.~\eqref{eq: F CHSH} is applied on side $\sB$, i.e.
\be
\rho'_i = \Id\otimes\Lambda_B^{(i)}[\rho_i].
\ee
This state admits a purification $\ket{\Psi_i}\in \cH_A^{(i)}\otimes\cH_\t{E}^{(i)}\otimes\dC^2$ on an extended Hilbert space, i.e. $\rho_i'=\t{Tr}_\t{E}\proj{\Psi_i}$. As $B^{(i)}$ beholds a qubit in the $B^{(i)}| A^{(i)}\t{E}^{(i)}$ splitting, the purified states admit a Schmidt decomposition with two components only:
\be\label{eq:source pur}
\ket{\Psi_i} = \sqrt{q_i} \ket{0}_{B^{(i)}}\ket{0}_{A^{(i)}\t{E}^{(i)}}+ \sqrt{1-q_i}\ket{1}_{B^{(i)}}  \ket{1}_{A^{(i)}\t{E}^{(i)}},
\ee
where $\ket{0}_{B^{(i)}}$ and $\ket{1}_{B^{(i)}}$ define a basis in  $\dC^2.$ $\ket{0}_{A^{(i)}\t{E}^{(i)}}$ and $\ket{1}_{A^{(i)}\t{E}^{(i)}}$ are two orthogonal states in $\cH_A^{(i)}\otimes\cH_\t{E}^{(i)}$.\\

Similarly, we now consider the qubit state corresponding to the conditional state $\varrho_k$ after the extraction maps on side $\cB$
\begin{equation}
    \varrho_k'= \Lambda_B^{(1)} \otimes \Lambda_B^{(2)} [\varrho_k].
\end{equation}
We introduce the unitaries $U_k = U_k^{(1)} \otimes U_k^{(2)}$ such that $U_k|\phi_{\bf k}\rangle = |\phi_{00}\rangle.$ The state 
\be
\sigma=\sum_k p_k U_k \varrho_k' \, U_k^\dag
\ee
then satisfies
\be\label{eq: F Neg bound}
\bra{\phi_{00}} \sigma \ket{\phi_{00}} =\sum_k p_k \bra{\phi_{\bf k}}\varrho_k'
 \ket{\phi_{\bf k}} \geq \sum_k p_k \left(F^o_{k}\right)^2,
\ee
and can be prepared from $\ket{\Psi_1} \otimes \ket{\Psi_2}$ by local operations and classical communications (LOCC) in the splittings $B^{(1)}|\sA\t{E}^{(1)}\t{E}^{(2)}|B^{(2)}.$ Therefore, $\sigma$ can also be prepared from $\Psi_1$ ($\Psi_2$) by LOCC in the splitting $B^{(1)}\sA\t{E}^{(1)}\t{E}^{(2)}|B^{(2)}$  ($B^{(1)}|\sA\t{E}^{(1)}\t{E}^{(2)}B^{(2)}$). Hence, for any entanglement measure $E_M$  the inequality
\be
\label{Ent_measure}
E_M(\ket{\Psi_i})\geq E_M( \sigma)
\ee
holds for $i=1,2$.\\

Let us consider the Negativity $N$. It has the property that for any two-qubit state $\rho$~\cite{Verstraete02}
\begin{equation}
    N(\rho) = \frac{||\rho^{T_B}||_1 - 1}{2} \geq \bra{\phi_{00}}\rho \ket{\phi_{00}} -\frac{1}{2}.
\end{equation}
Using $N(\Psi_i)=\sqrt{q_i(1-q_i)}$, Eq~\eqref{eq: F Neg bound}, Eq.~\eqref{Ent_measure}, and the fact that the negativity is an entanglement measure, we end up with 
\be
\label{bound_fid}
\sqrt{q_i(1-q_i)} \geq \sum_k p_k \left(F^o_k\right)^2 -\frac{1}{2}
\ee
for both $i=1$ and $2.$\\

%We mentioned before the purification of $\rho_i'=\Tr_E |\Phi_i\rangle\langle\Phi_i|$ by means of an extension of the Hilbert space $\cH_A^{(i)} \rightarrow \cH_A^{(i)} \otimes \cH_E^{(i)}.$ We consider a map $\Lambda^{(i)}_A$ such that $\lambda^{(i)}_A \otimes \Id[|\phi^+\rangle] = \frac{1}{\sqrt{2}}(|0\rangle_B^{(i)}|0\rangle_{AE}^{(i)}+|1\rangle_B^{(i)}|1\rangle_{AE}^{(i)}).$ As the fidelity does not leads to a maximally entangled state can be obtained from $\rho_i'$  

\begin{figure}
\begin{center}
\includegraphics[width=0.98\columnwidth]{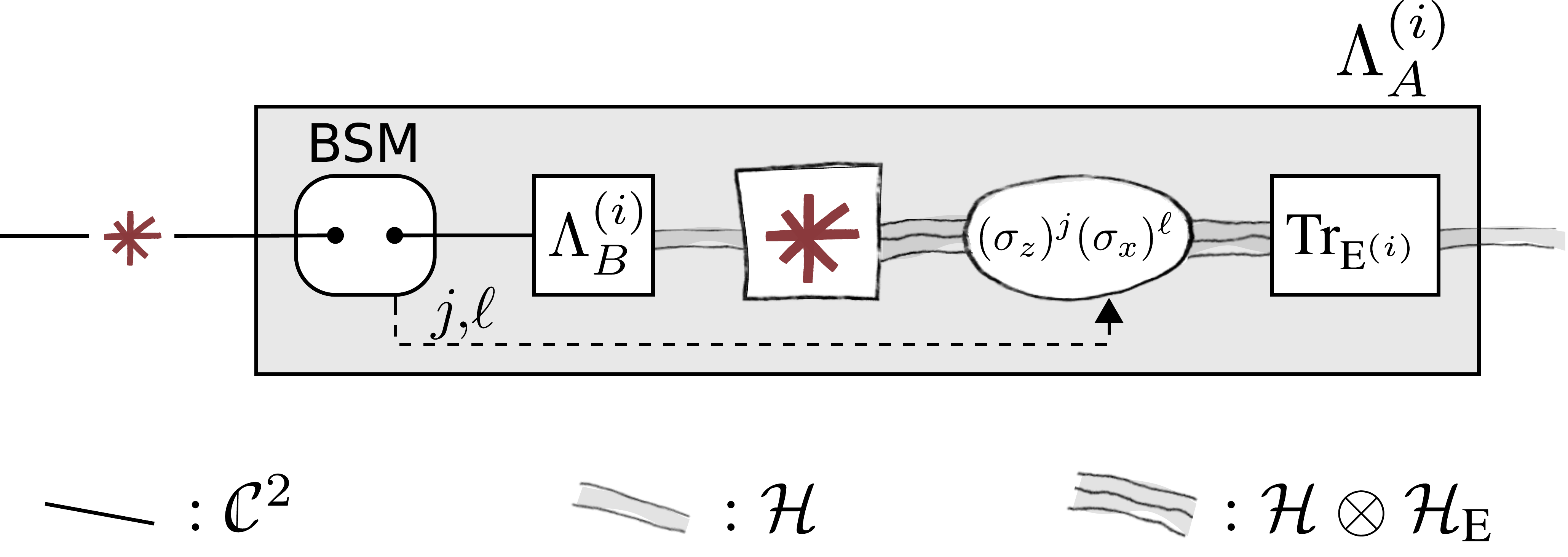}
\caption{The proposed injection maps $\Lambda_A^{(i)}$ is obtained by using the purified state of the source supplemented with the local extraction map $\Lambda_B^{(i)}$ to teleport the input qubit (half of $\ket{\phi_{00}}$ state here). Here, $\mathcal{H}_\text{E}$ is the purifying Hilbert space, which is traced out after teleportation.}
\label{figSM: 1}
\end{center}
\end{figure}

We now construct a map $\Lambda_A^{(i)}$ such that 
\be
F(\Lambda^{(i)}_A\otimes\Id[\ket{\phi_{00}}], \rho_i') =  \frac{1}{2}+\sqrt{q_ i(1-q_ i)}.
\ee
Consider the map obtained from the teleportation operation represented in Fig. \ref{figSM: 1}: a Bell state measurement is applied on the input qubit and on one part of the state of the source supplemented with the map $\Lambda_B^{(i)}$, i.e. $\ket{\Psi_i}$. The corresponding projection onto $\ket{\phi_{j\ell}}$ is identified by the classical outputs $j\ell=00, 01, 10$ and $11$. In case where $j=1$ ($\ell=1$), a $\sigma_z$ ($\sigma_x$) operation is applied in the qubit subspace having $\{|0\rangle_{A\t{E}^{(i)}}, |1\rangle_{A\t{E}^{(i)}}\}$ as canonical basis. Finally the auxiliary system $\t{E}^{(i)}$ is traced out. If the input state is initially entangled with respect to the $\ket{\phi_{00}}$ state, we find 
$$
\Lambda_A^{(i)} \otimes \Id [|\phi_{00}\rangle] = \frac{1}{2} \left(\rho_i' + \left(\sigma_x \otimes \sigma_x\right) \rho_i' \left(\sigma_x \otimes \sigma_x\right)\right)
$$
that is 
\be
\label{fid_int}
F(\Lambda^{(i)}_A\otimes\Id[\ket{\phi_{00}}], \rho_i') =  \frac{1}{2}+\sqrt{q_ i(1-q_ i)}.
\ee\\

Combining the previous result with Eq. \eqref{bound_fid},  we get 
\begin{equation}
F( \Lambda_B^{(i)}[\rho_i],\Lambda_A^{(i)} [\ket{\phi_{00}}]) \geq \sum_k p_k \left(F^o_k\right)^2
\end{equation}
and thus
\begin{align}
 F( &\Lambda_B^{(1)}\otimes \Lambda_B^{(2)}[\rho_1\otimes \rho_2],\Lambda_A^{(1)}\otimes \Lambda_A^{(2)} [\ket{\phi_{00}}^{\otimes 2}]) \nonumber\\&\geq \left(\sum_k p_k \left(F^o_k\right)^2\right)^2.
\end{align}
The processing inequality for fidelity ensures that the bound holds when the measurement $\mathcal{M}$ is applied on the Alice side of both states,
\begin{align}
 F( &\Lambda_B^{(1)}\otimes \Lambda_B^{(2)}\otimes \mathcal{M} [\rho_1\otimes \rho_2],\mathcal{M}\circ(\Lambda_A^{(1)}\otimes \Lambda_A^{(2)}) [\ket{\phi_{00}}^{\otimes 2}]) \nonumber\\&\geq \left(\sum_k p_k \left(F^o_k\right)^2\right)^2.
\end{align}
Finally, combining this bound with the output fidelity 
\be
F( \Lambda_B^{(1)}\otimes \Lambda_B^{(2)}\otimes \mathcal{M} [\rho_1\otimes \rho_2],\overline{\mathcal{M}} [\ket{\phi_{00}}^{\otimes 2}]) \geq F^o,
\ee
and using the equivalent of the triangle inequality for fidelities, we get the desired bound on the quality of the Bell state measurement:
\begin{eqnarray}
\nonumber
&\arccos(\cF(\mathcal{M},\overline{\mathcal{M}})) &\leq \arccos(F^o)
\\
&&
+\arccos\left(\left( \sum_k p_k \left(F^o_k\right)^2\right)^2\right).
\nonumber
\end{eqnarray}
\\

%%%%%%%%%%%%%%%%%%%%%%%%%%%%%%%%%%%%%
\paragraph{Appendix C: Certification of probabilistic and partial Bell state measurements--} The aim of this appendix is to demonstrate the relations \eqref{eq: F conditional} and \eqref{eq: zeta0} given in the main text that bound $\mathcal{F}_\text{cond} \left(M_0, \overline{ M_0} \,\right)$ and $\zeta_0$. These bounds rely on the CHSH value $\beta_0$ obtained on step I by parties $B^{(1)}$ and $B^{(2)}$ when measuring the state conditioned on the outcome $k=0$ of the BSM, on the probability $p_0$ of observing this outcome $k=0$, and on the Bell value $\delta$ obtained by all the four parties in step II. We start by giving a clear definition of self-testing for a probabilistic Bell state measurement.\\

Consider the outcome $k=0$ and associated completely positive map $M_0.$ We say that $M_0$ corresponds to a branch of a probabilistic or partial Bell state measurement if there exist local maps $\Lambda_A^{(1)}$ and $\Lambda_A^{(2)}$ and a finite $\zeta_0>0$ such that
\be \label{eq: aBSM ideal}
M_0\circ(\Lambda_A^{(1)}\otimes \Lambda_A^{(2)})[\rho_\sA\otimes\tau] = \zeta_0\ \t{Tr}_{\sA}\left( \proj{\phi_{00}}\otimes \Id \, \tau \right)
\ee
for all states $\tau$.\\ 
	
As before, we use the Choi  fidelity to quantify the quality of a probabilistic or partial Bell state measurement in any practical situation where the desired operation is not realized exactly. This allows us to define the conditional fidelity for a probabilistic channel as
\begin{align}\label{eq:F aBSM}
&\mathcal{F}_\text{cond} \left(M_0, \overline{ M_0} \,\right) = \underset{\Lambda_A^{(1)}, \Lambda_A^{(2)}}{\max}\\
&\ \frac{4}{\zeta_0}\, F\left( M_0 \circ(\Lambda_A^{(1)}\otimes \Lambda_A^{(2)})[\rho_\sA\otimes\ket{\phi_{00}}^{\otimes 2}], \overline{M}_0[\ket{\phi_{00}}^{\otimes 2}]\right) \nonumber
\end{align}
where $\zeta_0/4=\t{Tr} \big(M_0 \circ(\Lambda_A^{(1)}\otimes \Lambda_A^{(2)})[\rho_\sA\otimes\ket{\phi_{00}}^{\otimes 2}]\big)$ is the probability of observing outcome `0' when $M_0$ acts on half of two singlets, and the channel corresponding to the reference measurement is defined according to $\overline{M_0}[\cdot] =  \t{tr}_{\sA}\left(\zeta_0\,\proj{\phi_{00}} \otimes \Id_\sB [\cdot]\right)$.
\\

Let $\varrho_0$ be the following conditional state
\be
\varrho_0 = \frac{ M_0( \rho )}{\t{Tr} \,M_0( \rho )} \in L(\cH_B^{(1)}\otimes \cH_B^{(2)}).
\ee
From Eq. (10) of the main text, we have 
\be
\label{rho2rho3}
F\left( (\Lambda_B^{(1)} \otimes \Lambda_B^{(2)})[\varrho_0], \ket{\phi_{00}}\right)\geq F^o_0= \sqrt{1 -\frac{1}{2}\cdot\frac{2\sqrt{2}-\beta_k}{2\sqrt 2 -\beta^*}}
%\sqrt{1 -\frac{2\sqrt{2}-\beta_0}{2\sqrt 2 -\beta^*}}},
\ee  
where $\beta_0$ is the CHSH value observed between $B^{(1)}$ and $B^{(2)}$ when the supposedly Bell state measurement outcomes 0 when applied on $\rho$. At the same time, the self-testing of the state produced by the source in step II ensures that
\begin{equation}
F(\widetilde{\Lambda}_A^{(1)}\otimes \widetilde{\Lambda}_A^{(2)} \otimes \Lambda_B^{(1)} \otimes \Lambda_B^{(2)} [\rho],|\phi_{00}\rangle^{\otimes 2}) \geq F^i.   
\end{equation}
with the same isometries $\Lambda_B^{(i)}$ on Bob's side, and some isometries $\widetilde{\Lambda}_A^{(i)}$ on Alice's. As shown in the proof of Proposition 4 in \cite{Sekatski_18}, the relation above guarantees the existence of new isometries on Alice's side, $\Lambda_A^{(i)}$, such that
\begin{equation}
\label{eq_fidinitial}    
F( \Lambda_B^{(1)} \otimes \Lambda_B^{(2)} [\rho],\Lambda_A^{(1)}\otimes \Lambda_A^{(2)} [\proj{\phi_{00}}^{\otimes 2}\!\otimes \rho_\sA])\geq F^i,  
\end{equation}
where $\rho_\sA$ is the state of all the auxiliary degrees of freedom at side $\sA$ distributed by the source.\\

We now have the basic ingredients that will give rise to our bound on the quality of probabilistic and partial Bell state measurements. To put them together, we add one more tool to the ones already present in~\cite{Sekatski_18} to deal with probabilistic channels.

\begin{lemma}
Given two quantum states $\rho$ and $\sigma$ with Uhlmann fidelity $F(\rho,\sigma)=F$, and a probabilistic quantum channel $\cE_0$ with $\tr {\cE_0[\rho]}=p_0$ and $\tr{ \cE_0[\sigma]}=q_0$. The fidelity of the final states $\rho_0 = \frac{\cE_0[\rho]}{p_0}$ and $\sigma_0= \frac{ \cE_0[\sigma]}{q_0}$ satisfies
\be\label{eq:proc prob}
\sqrt{q_0 p_0} F\left(\rho_0,\sigma_0 \right) \geq F - \sqrt{(1-p_0)(1-q_0)}.
\ee
\end{lemma}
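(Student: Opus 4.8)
The plan is to reduce the statement to the ordinary data-processing (monotonicity) inequality for the Uhlmann fidelity by completing the probabilistic channel $\cE_0$ into a genuine CPTP map that records its success or failure on an auxiliary classical flag. Since $\cE_0$ is completely positive and trace non-increasing, I would first introduce a complementary ``failure'' branch $\cE_1$ such that $\cE_0+\cE_1$ is trace preserving (such a completion always exists), and define the flagged CPTP map
\be
\Phi[\cdot] = \cE_0[\cdot]\otimes\proj{0} + \cE_1[\cdot]\otimes\proj{1}.
\ee
Applying $\Phi$ to the two inputs produces states that are block-diagonal in the orthogonal flag basis,
\begin{align}
\Phi[\rho] &= p_0\,\rho_0\otimes\proj{0} + (1-p_0)\,\tilde\rho_1\otimes\proj{1}, \nonumber \\
\Phi[\sigma] &= q_0\,\sigma_0\otimes\proj{0} + (1-q_0)\,\tilde\sigma_1\otimes\proj{1},
\end{align}
where $\tilde\rho_1,\tilde\sigma_1$ denote the normalized failure-branch states.

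The next step is to invoke monotonicity of the fidelity under the CPTP map $\Phi$, which gives $F\leq F(\Phi[\rho],\Phi[\sigma])$. The crucial simplification is that, because the flag states $\ket{0}$ and $\ket{1}$ are orthogonal, the fidelity decomposes sector by sector. Combining the additivity of the square-root fidelity over orthogonal blocks with its homogeneity $F(c\,\omega,c'\,\omega')=\sqrt{c\,c'}\,F(\omega,\omega')$ yields
\be
F(\Phi[\rho],\Phi[\sigma]) = \sqrt{p_0 q_0}\,F(\rho_0,\sigma_0) + \sqrt{(1-p_0)(1-q_0)}\,F(\tilde\rho_1,\tilde\sigma_1).
\ee
Bounding the failure-branch contribution trivially by $F(\tilde\rho_1,\tilde\sigma_1)\leq 1$ and rearranging then reproduces exactly the claimed inequality~\eqref{eq:proc prob}.

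The bulk of the work is conceptual rather than computational: the homogeneity and the block-additivity of the square-root-convention fidelity are the two facts that make the sector decomposition exact, and I expect verifying the block-additivity (that $F$ of two flag-block-diagonal states equals the sum of the per-sector fidelities) to be the main point that must be stated carefully, since it is precisely what converts the single data-processing bound into the weighted two-term identity. The remaining subtlety is purely bookkeeping: the degenerate cases $p_0=1$ or $q_0=1$ leave $\tilde\rho_1$ or $\tilde\sigma_1$ undefined, but their prefactor $\sqrt{(1-p_0)(1-q_0)}$ then vanishes, so the bound holds unchanged and no separate treatment is needed.
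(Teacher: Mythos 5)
Your proposal is correct and follows essentially the same route as the paper's own proof: completing $\cE_0$ into a flagged trace-preserving instrument, invoking monotonicity of the fidelity, using block-additivity over the orthogonal flag sectors, and bounding the failure-branch fidelity by one. The only difference is presentational (you spell out the homogeneity and block-additivity steps and the degenerate cases explicitly, which the paper leaves implicit).
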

\begin{proof}
A probabilistic quantum channel $\cE_0$ can be completed with the branch $\cE_\emptyset$ in order to form a quantum instrument $\cE= \{ \cE_0, \cE_\emptyset\}$ (a trace preserving quantum channel with an output a label "$0$" or "$\emptyset$"). The states after the action of $\cE$ read
\begin{align}
\cE(\rho) &= p_0 \rho_0 \oplus (1-p_0) \rho_\emptyset \\ \cE(\sigma) &= q_0 \sigma_0 \oplus (1-q_0) \sigma_\emptyset,
\end{align}
with $\tau_\emptyset = \frac{\cE_\emptyset[\tau]}{\tr{\cE_\emptyset[\tau]}}$.

Using the processing inequality of fidelity (the fact that it can not decrease by post-processing) we get
\begin{align}
\nonumber
    F &\leq F(\cE(\rho), \cE(\sigma)) \\
    \nonumber
    &= \sqrt{q_0 p_0} F(\rho_0,\sigma_0) + \sqrt{(1-p_0)(1-q_0)}
    F(\rho_\emptyset,\sigma_\emptyset).
\end{align}
Rearranging the terms and using $ F(\rho_\emptyset,\sigma_\emptyset)\leq 1$  we obtain the desired relation
\be
\sqrt{q_0 p_0}  F\left(\rho_0,\sigma_0 \right)\geq F - \sqrt{(1-p_0)(1-q_0)}.
\ee
\end{proof}

Applying inequality~\eqref{eq:proc prob} to the states $\Lambda_B^{(1)} \otimes \Lambda_B^{(2)} [\rho]$,  $\Lambda_A^{(1)}\otimes \Lambda_A^{(2)} [\proj{\phi_{00}}^{\otimes 2}\otimes \rho_\sA]$, and the supposed probabilistic Bell state measurement $M_0$ gives the following bound
\begin{align}\label{eq: F cond}
\sqrt{p_0 \frac{\zeta_0}{4}} F_c \geq F^i-\sqrt{(1-p_0)(1-\zeta_0/4)}
\end{align}
for the conditional fidelity
\begin{align}
F_c = F&\left(\frac{1}{p_0} \Lambda_B^{(1)} \otimes \Lambda_B^{(2)}\otimes M_0 [\rho],\right. \\
&\left.\frac{4}{\zeta_0} M_0\circ(\Lambda_A^{(1)}\otimes \Lambda_A^{(2)}) [\proj{\phi_{00}}^{\otimes 2}\!\!\otimes \! \rho_\sA]\right).
\end{align}\\

The inequality \eqref{eq: F cond} simultaneously lower bounds the rate $\zeta_0$ and the fidelity $F_c$ from the values of $p_0$ and $F_i$ that can be observed experimentally. But $\zeta_0$ is not directly observable. Hence, this inequality is only interesting when it sets a non-trivial bound on $F_c$ for all values of $\zeta_0$ compatible with the experiment. In particular, this means that the rhs of \eqref{eq: F cond} has to remain positive for arbitrarily small values of the rate $\zeta_0 \to 
 0$ (otherwise the observed values $F_i$ and $p_0$ are compatible with $\zeta_0 = 0$ and any value of $F_c$). Thus, 
we are interested in the regime
\be
\left(F^i\right)^2 + p_0 > 1.
\ee
In this regime, Eq.~\eqref{eq: F cond} can be simplified according to
\be\label{eq: bound Fc}
F_c \geq \frac{ F^i-\sqrt{(1-p_0)(1-\zeta_0/4)}}{\sqrt{p_0 \zeta_0/4}}\geq \sqrt{\frac{p_0 +\left(F^i\right)^2-1}{p_0}},
\ee
where for the last inequality we simply minimize the expression over $\zeta_0$: we find the unique value where the derivative with respect to $\zeta_0$ is zero and ensure that it is a minimum by verifying the positivity of the second derivative at that point.\\

From the same Eq.~\eqref{eq: F cond} and $F_c\leq 1$, we also obtain a bound on the rate:
\be\label{eq: rate lb}
\zeta_0 \geq 4 \left(\sqrt{p_0 \left(F^i\right)^2} - \sqrt{(1-p_0)\left(1-\left(F^i\right)^2\right)}\right)^2.
%\zeta_0 \geq 4 \frac{(F^i - \sqrt{(1-p_0)(1-\zeta_0/4)})^2}{p_0} \geq 4 \frac{(F^i - \sqrt{1-p_0})^2}{p_0},
\ee
We remark that the bounds on the rate and on the conditional fidelity are not independent, i.e. for the minimal value of $\zeta_0$ above, the value of $F_c$ can be shown to be larger than the bound in Eq.~\eqref{eq: bound Fc}. For more refined statements on the pair $(F_c,\, \zeta_0)$ one can directly work with Eq.~\eqref{eq: F cond}.\\

Using the equivalent of the triangle inequality for the Uhlmann fidelity, we combine Eq.~\eqref{rho2rho3} and Eq.~\eqref{eq: bound Fc} to obtain a lower bound on the quality of the partial Bell state measurement
\begin{align}
\mathcal{F}_\text{cond} \left(M_0, \overline{ M_0} \,\right)
\geq \cos\Bigg(&\arccos{(F^o_0)}
\nonumber
\\
+&\arccos \left(\sqrt{\frac{p_0  + (F^i)^2 -1}{p_0}}\right)\Bigg)
\nonumber.
\end{align}
%In addition, the rate $\zeta_0$ of the obtained BSM is lower bounded by Eq.~\eqref{eq: rate lb}.
This fidelity and the rate $\zeta_0$ are plotted in Figs. \ref{fig: 5} and \ref{fig: 6} as a function of $\beta_0$ assuming $\delta=\beta_0/(2\sqrt{2})$ for $p_0=1/4,$ $p_0=1/10$ and $p_0=1/100.$ 

\begin{figure}
\begin{center}
\includegraphics[width=1\columnwidth]{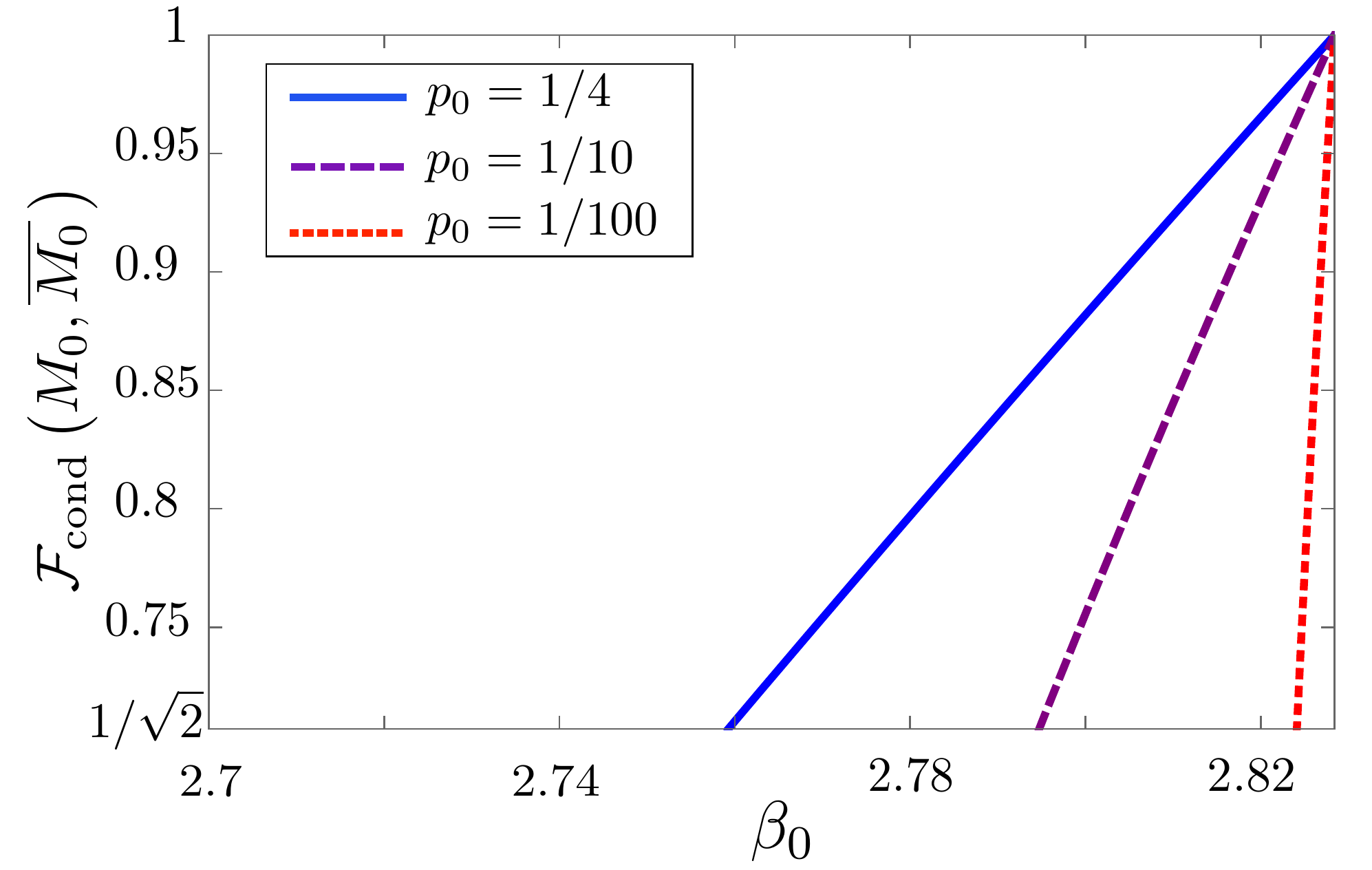}
\caption{Fidelity of a probabilistic/partial Bell state measurement as a function of the CHSH value $\beta_0$ obtained by parties $B^{(1)}$ and $B^{(2)}$ on the post-measurement state associated with outcome $0$. We consider the case where the certification of the source led to $\delta=\beta_0/{2\sqrt{2}}$ (Step II). The full line, dashed line and dashed with smaller dashed line correspond to the heralding probability $p_0=1/4,$ $p_0=0.1$ and $p_0=0.01$ respectively. }
\label{fig: 5}
\end{center}
\end{figure}

\begin{figure}
\begin{center}
\includegraphics[width=1\columnwidth]{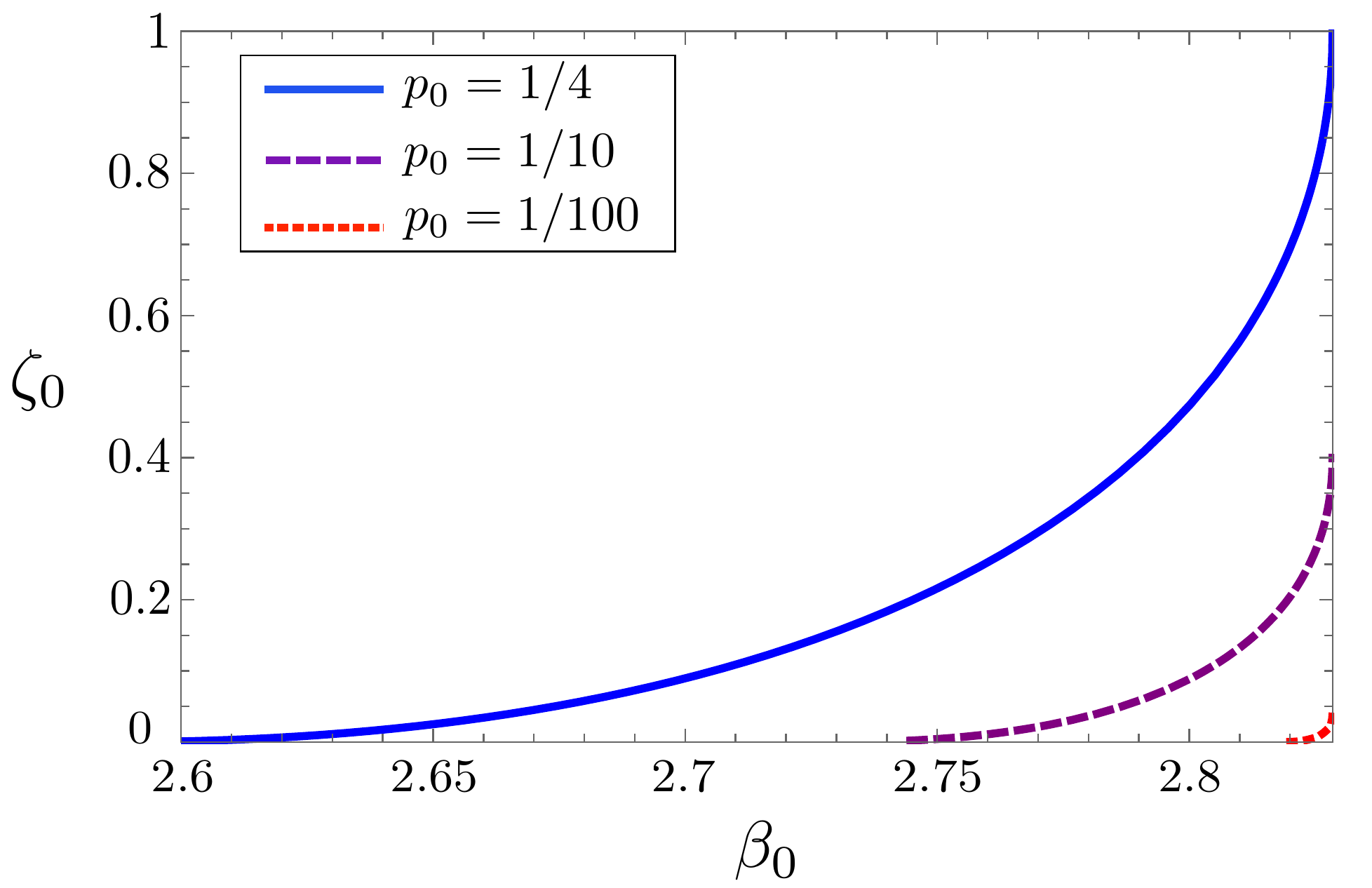}
\caption{Bound on $\zeta_0$ as a function of the CHSH value $\beta_0$ obtained by parties $B^{(1)}$ and $B^{(2)}$ on the post-measurement state associated with outcome $0$ of the BSM. We recall that $\zeta_0/4$ is the probability that the actual Bell state measurement supplemented with the proper injection maps produce the outcome `0' when operating on two maximally entangled two-qubit states. We consider the case where the certification of the source led to $\delta=\beta_0/(2\sqrt{2})$ (Step II). The full line, dashed line and dashed with smaller dashed line correspond to the observed heralding probability $p_0=1/4,$ $p_0=0.1$ and $p_0=0.01$ respectively.}
\label{fig: 6}
\end{center}
\end{figure}

\bibliography{BSM}

%merlin.mbs apsrev4-1.bst 2010-07-25 4.21a (PWD, AO, DPC) hacked
%Control: key (0)
%Control: author (8) initials jnrlst
%Control: editor formatted (1) identically to author
%Control: production of article title (-1) disabled
%Control: page (0) single
%Control: year (1) truncated
%Control: production of eprint (0) enabled
\begin{thebibliography}{31}%
\makeatletter
\providecommand \@ifxundefined [1]{%
 \@ifx{#1\undefined}
}%
\providecommand \@ifnum [1]{%
 \ifnum #1\expandafter \@firstoftwo
 \else \expandafter \@secondoftwo
 \fi
}%
\providecommand \@ifx [1]{%
 \ifx #1\expandafter \@firstoftwo
 \else \expandafter \@secondoftwo
 \fi
}%
\providecommand \natexlab [1]{#1}%
\providecommand \enquote  [1]{``#1''}%
\providecommand \bibnamefont  [1]{#1}%
\providecommand \bibfnamefont [1]{#1}%
\providecommand \citenamefont [1]{#1}%
\providecommand \href@noop [0]{\@secondoftwo}%
\providecommand \href [0]{\begingroup \@sanitize@url \@href}%
\providecommand \@href[1]{\@@startlink{#1}\@@href}%
\providecommand \@@href[1]{\endgroup#1\@@endlink}%
\providecommand \@sanitize@url [0]{\catcode `\\12\catcode `\$12\catcode
  `\&12\catcode `\#12\catcode `\^12\catcode `\_12\catcode `\%12\relax}%
\providecommand \@@startlink[1]{}%
\providecommand \@@endlink[0]{}%
\providecommand \url  [0]{\begingroup\@sanitize@url \@url }%
\providecommand \@url [1]{\endgroup\@href {#1}{\urlprefix }}%
\providecommand \urlprefix  [0]{URL }%
\providecommand \Eprint [0]{\href }%
\providecommand \doibase [0]{http://dx.doi.org/}%
\providecommand \selectlanguage [0]{\@gobble}%
\providecommand \bibinfo  [0]{\@secondoftwo}%
\providecommand \bibfield  [0]{\@secondoftwo}%
\providecommand \translation [1]{[#1]}%
\providecommand \BibitemOpen [0]{}%
\providecommand \bibitemStop [0]{}%
\providecommand \bibitemNoStop [0]{.\EOS\space}%
\providecommand \EOS [0]{\spacefactor3000\relax}%
\providecommand \BibitemShut  [1]{\csname bibitem#1\endcsname}%
\let\auto@bib@innerbib\@empty
%</preamble>
\bibitem [{\citenamefont {Rosset}\ \emph {et~al.}(2012)\citenamefont {Rosset},
  \citenamefont {Ferretti-Sch\"obitz}, \citenamefont {Bancal}, \citenamefont
  {Gisin},\ and\ \citenamefont {Liang}}]{Rosset12}%
  \BibitemOpen
  \bibfield  {author} {\bibinfo {author} {\bibfnamefont {D.}~\bibnamefont
  {Rosset}}, \bibinfo {author} {\bibfnamefont {R.}~\bibnamefont
  {Ferretti-Sch\"obitz}}, \bibinfo {author} {\bibfnamefont {J.-D.}\
  \bibnamefont {Bancal}}, \bibinfo {author} {\bibfnamefont {N.}~\bibnamefont
  {Gisin}}, \ and\ \bibinfo {author} {\bibfnamefont {Y.-C.}\ \bibnamefont
  {Liang}},\ }\href {\doibase 10.1103/PhysRevA.86.062325} {\bibfield  {journal}
  {\bibinfo  {journal} {Phys. Rev. A}\ }\textbf {\bibinfo {volume} {86}},\
  \bibinfo {pages} {062325} (\bibinfo {year} {2012})}\BibitemShut {NoStop}%
\bibitem [{\citenamefont {Ac\'{\i}n}\ \emph {et~al.}(2006)\citenamefont
  {Ac\'{\i}n}, \citenamefont {Gisin},\ and\ \citenamefont {Masanes}}]{Acin06}%
  \BibitemOpen
  \bibfield  {author} {\bibinfo {author} {\bibfnamefont {A.}~\bibnamefont
  {Ac\'{\i}n}}, \bibinfo {author} {\bibfnamefont {N.}~\bibnamefont {Gisin}}, \
  and\ \bibinfo {author} {\bibfnamefont {L.}~\bibnamefont {Masanes}},\ }\href
  {\doibase 10.1103/PhysRevLett.97.120405} {\bibfield  {journal} {\bibinfo
  {journal} {Phys. Rev. Lett.}\ }\textbf {\bibinfo {volume} {97}},\ \bibinfo
  {pages} {120405} (\bibinfo {year} {2006})}\BibitemShut {NoStop}%
\bibitem [{\citenamefont {Bell}(1964)}]{Bell64}%
  \BibitemOpen
  \bibfield  {author} {\bibinfo {author} {\bibfnamefont {J.}~\bibnamefont
  {Bell}},\ }\href@noop {} {\bibfield  {journal} {\bibinfo  {journal}
  {Physics}\ }\textbf {\bibinfo {volume} {1}},\ \bibinfo {pages} {195}
  (\bibinfo {year} {1964})}\BibitemShut {NoStop}%
\bibitem [{\citenamefont {Bancal}\ \emph {et~al.}(2011)\citenamefont {Bancal},
  \citenamefont {Gisin}, \citenamefont {Liang},\ and\ \citenamefont
  {Pironio}}]{Bancal11}%
  \BibitemOpen
  \bibfield  {author} {\bibinfo {author} {\bibfnamefont {J.-D.}\ \bibnamefont
  {Bancal}}, \bibinfo {author} {\bibfnamefont {N.}~\bibnamefont {Gisin}},
  \bibinfo {author} {\bibfnamefont {Y.-C.}\ \bibnamefont {Liang}}, \ and\
  \bibinfo {author} {\bibfnamefont {S.}~\bibnamefont {Pironio}},\ }\href
  {\doibase 10.1103/PhysRevLett.106.250404} {\bibfield  {journal} {\bibinfo
  {journal} {Phys. Rev. Lett.}\ }\textbf {\bibinfo {volume} {106}},\ \bibinfo
  {pages} {250404} (\bibinfo {year} {2011})}\BibitemShut {NoStop}%
\bibitem [{\citenamefont {Colbeck}(2007)}]{Colbeck07}%
  \BibitemOpen
  \bibfield  {author} {\bibinfo {author} {\bibfnamefont {R.}~\bibnamefont
  {Colbeck}},\ }\href@noop {} {Ph.D. thesis},\ \bibinfo {address} {University
  of Cambridge} (\bibinfo {year} {2007})\BibitemShut {NoStop}%
\bibitem [{\citenamefont {Pironio}\ \emph {et~al.}(2010)\citenamefont
  {Pironio}, \citenamefont {Ac{\'i}n}, \citenamefont {Massar}, \citenamefont
  {de~la Giroday}, \citenamefont {Matsukevich}, \citenamefont {Maunz},
  \citenamefont {Olmschenk}, \citenamefont {Hayes}, \citenamefont {Luo},
  \citenamefont {Manning},\ and\ \citenamefont {Monroe}}]{Pironio11}%
  \BibitemOpen
  \bibfield  {author} {\bibinfo {author} {\bibfnamefont {S.}~\bibnamefont
  {Pironio}}, \bibinfo {author} {\bibfnamefont {A.}~\bibnamefont {Ac{\'i}n}},
  \bibinfo {author} {\bibfnamefont {S.}~\bibnamefont {Massar}}, \bibinfo
  {author} {\bibfnamefont {A.~B.}\ \bibnamefont {de~la Giroday}}, \bibinfo
  {author} {\bibfnamefont {D.~N.}\ \bibnamefont {Matsukevich}}, \bibinfo
  {author} {\bibfnamefont {P.}~\bibnamefont {Maunz}}, \bibinfo {author}
  {\bibfnamefont {S.}~\bibnamefont {Olmschenk}}, \bibinfo {author}
  {\bibfnamefont {D.}~\bibnamefont {Hayes}}, \bibinfo {author} {\bibfnamefont
  {L.}~\bibnamefont {Luo}}, \bibinfo {author} {\bibfnamefont {T.~A.}\
  \bibnamefont {Manning}}, \ and\ \bibinfo {author} {\bibfnamefont
  {C.}~\bibnamefont {Monroe}},\ }\href {\doibase 10.1038/nature09008}
  {\bibfield  {journal} {\bibinfo  {journal} {Nature}\ }\textbf {\bibinfo
  {volume} {464}},\ \bibinfo {pages} {1021} (\bibinfo {year}
  {2010})}\BibitemShut {NoStop}%
\bibitem [{\citenamefont {Ekert}(1991)}]{Ekert91}%
  \BibitemOpen
  \bibfield  {author} {\bibinfo {author} {\bibfnamefont {A.~K.}\ \bibnamefont
  {Ekert}},\ }\href {\doibase 10.1103/PhysRevLett.67.661} {\bibfield  {journal}
  {\bibinfo  {journal} {Phys. Rev. Lett.}\ }\textbf {\bibinfo {volume} {67}},\
  \bibinfo {pages} {661} (\bibinfo {year} {1991})}\BibitemShut {NoStop}%
\bibitem [{\citenamefont {Ac\'{\i}n}\ \emph {et~al.}(2007)\citenamefont
  {Ac\'{\i}n}, \citenamefont {Brunner}, \citenamefont {Gisin}, \citenamefont
  {Massar}, \citenamefont {Pironio},\ and\ \citenamefont {Scarani}}]{Acin07}%
  \BibitemOpen
  \bibfield  {author} {\bibinfo {author} {\bibfnamefont {A.}~\bibnamefont
  {Ac\'{\i}n}}, \bibinfo {author} {\bibfnamefont {N.}~\bibnamefont {Brunner}},
  \bibinfo {author} {\bibfnamefont {N.}~\bibnamefont {Gisin}}, \bibinfo
  {author} {\bibfnamefont {S.}~\bibnamefont {Massar}}, \bibinfo {author}
  {\bibfnamefont {S.}~\bibnamefont {Pironio}}, \ and\ \bibinfo {author}
  {\bibfnamefont {V.}~\bibnamefont {Scarani}},\ }\href {\doibase
  10.1103/PhysRevLett.98.230501} {\bibfield  {journal} {\bibinfo  {journal}
  {Phys. Rev. Lett.}\ }\textbf {\bibinfo {volume} {98}},\ \bibinfo {pages}
  {230501} (\bibinfo {year} {2007})}\BibitemShut {NoStop}%
\bibitem [{\citenamefont {Cirel'son}(1980)}]{tsirelson_80}%
  \BibitemOpen
  \bibfield  {author} {\bibinfo {author} {\bibfnamefont {B.~S.}\ \bibnamefont
  {Cirel'son}},\ }\href {\doibase 10.1007/BF00417500} {\bibfield  {journal}
  {\bibinfo  {journal} {Letters in Mathematical Physics}\ }\textbf {\bibinfo
  {volume} {4}},\ \bibinfo {pages} {93} (\bibinfo {year} {1980})}\BibitemShut
  {NoStop}%
\bibitem [{\citenamefont {Popescu}\ and\ \citenamefont
  {Rohrlich}(1992)}]{popescu_92}%
  \BibitemOpen
  \bibfield  {author} {\bibinfo {author} {\bibfnamefont {S.}~\bibnamefont
  {Popescu}}\ and\ \bibinfo {author} {\bibfnamefont {D.}~\bibnamefont
  {Rohrlich}},\ }\href@noop {} {\bibfield  {journal} {\bibinfo  {journal}
  {Phys. Lett. A}\ }\textbf {\bibinfo {volume} {169}},\ \bibinfo {pages} {411}
  (\bibinfo {year} {1992})}\BibitemShut {NoStop}%
\bibitem [{\citenamefont {Braunstein}\ \emph {et~al.}(1992)\citenamefont
  {Braunstein}, \citenamefont {Mann},\ and\ \citenamefont
  {Revzen}}]{braunstein_92}%
  \BibitemOpen
  \bibfield  {author} {\bibinfo {author} {\bibfnamefont {S.~L.}\ \bibnamefont
  {Braunstein}}, \bibinfo {author} {\bibfnamefont {A.}~\bibnamefont {Mann}}, \
  and\ \bibinfo {author} {\bibfnamefont {M.}~\bibnamefont {Revzen}},\
  }\href@noop {} {\bibfield  {journal} {\bibinfo  {journal} {Phys. Rev. Lett.}\
  }\textbf {\bibinfo {volume} {68}},\ \bibinfo {pages} {3259} (\bibinfo {year}
  {1992})}\BibitemShut {NoStop}%
\bibitem [{\citenamefont {Mayers}\ and\ \citenamefont {Yao}(2004)}]{mayers_04}%
  \BibitemOpen
  \bibfield  {author} {\bibinfo {author} {\bibfnamefont {D.}~\bibnamefont
  {Mayers}}\ and\ \bibinfo {author} {\bibfnamefont {A.}~\bibnamefont {Yao}},\
  }\href@noop {} {\bibfield  {journal} {\bibinfo  {journal} {QIC}\ }\textbf
  {\bibinfo {volume} {4}},\ \bibinfo {pages} {273} (\bibinfo {year}
  {2004})}\BibitemShut {NoStop}%
\bibitem [{\citenamefont {Kaniewski}(2016)}]{jed_16}%
  \BibitemOpen
  \bibfield  {author} {\bibinfo {author} {\bibfnamefont {J.}~\bibnamefont
  {Kaniewski}},\ }\href {\doibase 10.1103/PhysRevLett.117.070402} {\bibfield
  {journal} {\bibinfo  {journal} {Phys. Rev. Lett.}\ }\textbf {\bibinfo
  {volume} {117}},\ \bibinfo {pages} {070402} (\bibinfo {year}
  {2016})}\BibitemShut {NoStop}%
\bibitem [{\citenamefont {Coladangelo}\ \emph
  {et~al.}(2017{\natexlab{a}})\citenamefont {Coladangelo}, \citenamefont
  {Goh},\ and\ \citenamefont {Scarani}}]{coladangelo_17}%
  \BibitemOpen
  \bibfield  {author} {\bibinfo {author} {\bibfnamefont {A.}~\bibnamefont
  {Coladangelo}}, \bibinfo {author} {\bibfnamefont {K.~T.}\ \bibnamefont
  {Goh}}, \ and\ \bibinfo {author} {\bibfnamefont {V.}~\bibnamefont
  {Scarani}},\ }\href@noop {} {\bibfield  {journal} {\bibinfo  {journal}
  {Nature communications}\ }\textbf {\bibinfo {volume} {8}},\ \bibinfo {pages}
  {15485} (\bibinfo {year} {2017}{\natexlab{a}})}\BibitemShut {NoStop}%
\bibitem [{\citenamefont {{\v S}upi\'c}\ \emph {et~al.}(2017)\citenamefont {{\v
  S}upi\'c}, \citenamefont {Coladangelo}, \citenamefont {Augusiak},\ and\
  \citenamefont {Ac\'in}}]{Supic_17}%
  \BibitemOpen
  \bibfield  {author} {\bibinfo {author} {\bibfnamefont {I.}~\bibnamefont {{\v
  S}upi\'c}}, \bibinfo {author} {\bibfnamefont {A.}~\bibnamefont
  {Coladangelo}}, \bibinfo {author} {\bibfnamefont {R.}~\bibnamefont
  {Augusiak}}, \ and\ \bibinfo {author} {\bibfnamefont {A.}~\bibnamefont
  {Ac\'in}},\ }\href@noop {} {\bibfield  {journal} {\bibinfo  {journal}
  {preprint arXiv:1707.06534}\ } (\bibinfo {year} {2017})}\BibitemShut
  {NoStop}%
\bibitem [{\citenamefont {Yang}\ \emph {et~al.}(2014)\citenamefont {Yang},
  \citenamefont {V\'ertesi}, \citenamefont {Bancal}, \citenamefont {Scarani},\
  and\ \citenamefont {Navascu\'es}}]{Yang_14}%
  \BibitemOpen
  \bibfield  {author} {\bibinfo {author} {\bibfnamefont {T.~H.}\ \bibnamefont
  {Yang}}, \bibinfo {author} {\bibfnamefont {T.}~\bibnamefont {V\'ertesi}},
  \bibinfo {author} {\bibfnamefont {J.-D.}\ \bibnamefont {Bancal}}, \bibinfo
  {author} {\bibfnamefont {V.}~\bibnamefont {Scarani}}, \ and\ \bibinfo
  {author} {\bibfnamefont {M.}~\bibnamefont {Navascu\'es}},\ }\href {\doibase
  10.1103/PhysRevLett.113.040401} {\bibfield  {journal} {\bibinfo  {journal}
  {Phys. Rev. Lett.}\ }\textbf {\bibinfo {volume} {113}},\ \bibinfo {pages}
  {040401} (\bibinfo {year} {2014})}\BibitemShut {NoStop}%
\bibitem [{\citenamefont {Bamps}\ and\ \citenamefont
  {Pironio}(2015)}]{Bamps_15}%
  \BibitemOpen
  \bibfield  {author} {\bibinfo {author} {\bibfnamefont {C.}~\bibnamefont
  {Bamps}}\ and\ \bibinfo {author} {\bibfnamefont {S.}~\bibnamefont
  {Pironio}},\ }\href {\doibase 10.1103/PhysRevA.91.052111} {\bibfield
  {journal} {\bibinfo  {journal} {Phys. Rev. A}\ }\textbf {\bibinfo {volume}
  {91}},\ \bibinfo {pages} {052111} (\bibinfo {year} {2015})}\BibitemShut
  {NoStop}%
\bibitem [{\citenamefont {Kaniewski}(2017)}]{Kaniewski_17}%
  \BibitemOpen
  \bibfield  {author} {\bibinfo {author} {\bibfnamefont {J.}~\bibnamefont
  {Kaniewski}},\ }\href {\doibase 10.1103/PhysRevA.95.062323} {\bibfield
  {journal} {\bibinfo  {journal} {Phys. Rev. A}\ }\textbf {\bibinfo {volume}
  {95}},\ \bibinfo {pages} {062323} (\bibinfo {year} {2017})}\BibitemShut
  {NoStop}%
\bibitem [{\citenamefont {Magniez}\ \emph {et~al.}(2006)\citenamefont
  {Magniez}, \citenamefont {Mayers}, \citenamefont {Mosca},\ and\ \citenamefont
  {Ollivier}}]{magniez_06}%
  \BibitemOpen
  \bibfield  {author} {\bibinfo {author} {\bibfnamefont {F.}~\bibnamefont
  {Magniez}}, \bibinfo {author} {\bibfnamefont {D.}~\bibnamefont {Mayers}},
  \bibinfo {author} {\bibfnamefont {M.}~\bibnamefont {Mosca}}, \ and\ \bibinfo
  {author} {\bibfnamefont {H.}~\bibnamefont {Ollivier}},\ }in\ \href@noop {}
  {\emph {\bibinfo {booktitle} {Automata, Languages and Programming}}},\
  \bibinfo {editor} {edited by\ \bibinfo {editor} {\bibfnamefont
  {M.}~\bibnamefont {Bugliesi}}, \bibinfo {editor} {\bibfnamefont
  {B.}~\bibnamefont {Preneel}}, \bibinfo {editor} {\bibfnamefont
  {V.}~\bibnamefont {Sassone}}, \ and\ \bibinfo {editor} {\bibfnamefont
  {I.}~\bibnamefont {Wegener}}}\ (\bibinfo  {publisher} {Springer Berlin
  Heidelberg},\ \bibinfo {address} {Berlin, Heidelberg},\ \bibinfo {year}
  {2006})\ pp.\ \bibinfo {pages} {72--83}\BibitemShut {NoStop}%
\bibitem [{\citenamefont {Reichardt}\ \emph {et~al.}(2013)\citenamefont
  {Reichardt}, \citenamefont {Unger},\ and\ \citenamefont
  {Vazirani}}]{reichardt_13}%
  \BibitemOpen
  \bibfield  {author} {\bibinfo {author} {\bibfnamefont {B.~W.}\ \bibnamefont
  {Reichardt}}, \bibinfo {author} {\bibfnamefont {F.}~\bibnamefont {Unger}}, \
  and\ \bibinfo {author} {\bibfnamefont {U.}~\bibnamefont {Vazirani}},\
  }\href@noop {} {\bibfield  {journal} {\bibinfo  {journal} {Nature}\ }\textbf
  {\bibinfo {volume} {496}},\ \bibinfo {pages} {456} (\bibinfo {year}
  {2013})}\BibitemShut {NoStop}%
\bibitem [{\citenamefont {Hajdu{\v{s}}ek}\ \emph {et~al.}(2015)\citenamefont
  {Hajdu{\v{s}}ek}, \citenamefont {P{\'e}rez-Delgado},\ and\ \citenamefont
  {Fitzsimons}}]{hajduvsek_15}%
  \BibitemOpen
  \bibfield  {author} {\bibinfo {author} {\bibfnamefont {M.}~\bibnamefont
  {Hajdu{\v{s}}ek}}, \bibinfo {author} {\bibfnamefont {C.~A.}\ \bibnamefont
  {P{\'e}rez-Delgado}}, \ and\ \bibinfo {author} {\bibfnamefont {J.~F.}\
  \bibnamefont {Fitzsimons}},\ }\href@noop {} {\bibfield  {journal} {\bibinfo
  {journal} {arXiv preprint arXiv:1502.02563}\ } (\bibinfo {year}
  {2015})}\BibitemShut {NoStop}%
\bibitem [{\citenamefont {Fitzsimons}\ \emph {et~al.}(2018)\citenamefont
  {Fitzsimons}, \citenamefont {Hajdu{\v{s}}ek},\ and\ \citenamefont
  {Morimae}}]{fitzsimons_18}%
  \BibitemOpen
  \bibfield  {author} {\bibinfo {author} {\bibfnamefont {J.~F.}\ \bibnamefont
  {Fitzsimons}}, \bibinfo {author} {\bibfnamefont {M.}~\bibnamefont
  {Hajdu{\v{s}}ek}}, \ and\ \bibinfo {author} {\bibfnamefont {T.}~\bibnamefont
  {Morimae}},\ }\href@noop {} {\bibfield  {journal} {\bibinfo  {journal} {Phys.
  Rev. Lett.}\ }\textbf {\bibinfo {volume} {120}},\ \bibinfo {pages} {040501}
  (\bibinfo {year} {2018})}\BibitemShut {NoStop}%
\bibitem [{\citenamefont {Coladangelo}\ \emph
  {et~al.}(2017{\natexlab{b}})\citenamefont {Coladangelo}, \citenamefont
  {Grilo}, \citenamefont {Jeffery},\ and\ \citenamefont
  {Vidick}}]{coladangelo_17_bis}%
  \BibitemOpen
  \bibfield  {author} {\bibinfo {author} {\bibfnamefont {A.}~\bibnamefont
  {Coladangelo}}, \bibinfo {author} {\bibfnamefont {A.}~\bibnamefont {Grilo}},
  \bibinfo {author} {\bibfnamefont {S.}~\bibnamefont {Jeffery}}, \ and\
  \bibinfo {author} {\bibfnamefont {T.}~\bibnamefont {Vidick}},\ }\href@noop {}
  {\bibfield  {journal} {\bibinfo  {journal} {arXiv preprint arXiv:1708.07359}\
  } (\bibinfo {year} {2017}{\natexlab{b}})}\BibitemShut {NoStop}%
\bibitem [{\citenamefont {Sekatski}\ \emph {et~al.}(2018)\citenamefont
  {Sekatski}, \citenamefont {Bancal}, \citenamefont {Wagner},\ and\
  \citenamefont {Sangouard}}]{Sekatski_18}%
  \BibitemOpen
  \bibfield  {author} {\bibinfo {author} {\bibfnamefont {P.}~\bibnamefont
  {Sekatski}}, \bibinfo {author} {\bibfnamefont {J.-D.}\ \bibnamefont
  {Bancal}}, \bibinfo {author} {\bibfnamefont {S.}~\bibnamefont {Wagner}}, \
  and\ \bibinfo {author} {\bibfnamefont {N.}~\bibnamefont {Sangouard}},\
  }\href@noop {} {\bibfield  {journal} {\bibinfo  {journal} {arXiv preprint
  arXiv:1802.02170}\ } (\bibinfo {year} {2018})}\BibitemShut {NoStop}%
\bibitem [{\citenamefont {Sangouard}\ \emph {et~al.}(2011)\citenamefont
  {Sangouard}, \citenamefont {Simon}, \citenamefont {de~Riedmatten},\ and\
  \citenamefont {Gisin}}]{Sangouard11}%
  \BibitemOpen
  \bibfield  {author} {\bibinfo {author} {\bibfnamefont {N.}~\bibnamefont
  {Sangouard}}, \bibinfo {author} {\bibfnamefont {C.}~\bibnamefont {Simon}},
  \bibinfo {author} {\bibfnamefont {H.}~\bibnamefont {de~Riedmatten}}, \ and\
  \bibinfo {author} {\bibfnamefont {N.}~\bibnamefont {Gisin}},\ }\href
  {\doibase 10.1103/RevModPhys.83.33} {\bibfield  {journal} {\bibinfo
  {journal} {Rev. Mod. Phys.}\ }\textbf {\bibinfo {volume} {83}},\ \bibinfo
  {pages} {33} (\bibinfo {year} {2011})}\BibitemShut {NoStop}%
\bibitem [{\citenamefont {Rabelo}\ \emph {et~al.}(2011)\citenamefont {Rabelo},
  \citenamefont {Ho}, \citenamefont {Cavalcanti}, \citenamefont {Brunner},\
  and\ \citenamefont {Scarani}}]{Rabelo11}%
  \BibitemOpen
  \bibfield  {author} {\bibinfo {author} {\bibfnamefont {R.}~\bibnamefont
  {Rabelo}}, \bibinfo {author} {\bibfnamefont {M.}~\bibnamefont {Ho}}, \bibinfo
  {author} {\bibfnamefont {D.}~\bibnamefont {Cavalcanti}}, \bibinfo {author}
  {\bibfnamefont {N.}~\bibnamefont {Brunner}}, \ and\ \bibinfo {author}
  {\bibfnamefont {V.}~\bibnamefont {Scarani}},\ }\href {\doibase
  10.1103/PhysRevLett.107.050502} {\bibfield  {journal} {\bibinfo  {journal}
  {Phys. Rev. Lett.}\ }\textbf {\bibinfo {volume} {107}},\ \bibinfo {pages}
  {050502} (\bibinfo {year} {2011})}\BibitemShut {NoStop}%
\bibitem [{\citenamefont {Bancal}\ \emph {et~al.}(2015)\citenamefont {Bancal},
  \citenamefont {Navascu\'es}, \citenamefont {Scarani}, \citenamefont
  {V\'ertesi},\ and\ \citenamefont {Yang}}]{Bancal_15}%
  \BibitemOpen
  \bibfield  {author} {\bibinfo {author} {\bibfnamefont {J.-D.}\ \bibnamefont
  {Bancal}}, \bibinfo {author} {\bibfnamefont {M.}~\bibnamefont {Navascu\'es}},
  \bibinfo {author} {\bibfnamefont {V.}~\bibnamefont {Scarani}}, \bibinfo
  {author} {\bibfnamefont {T.}~\bibnamefont {V\'ertesi}}, \ and\ \bibinfo
  {author} {\bibfnamefont {T.~H.}\ \bibnamefont {Yang}},\ }\href {\doibase
  10.1103/PhysRevA.91.022115} {\bibfield  {journal} {\bibinfo  {journal} {Phys.
  Rev. A}\ }\textbf {\bibinfo {volume} {91}},\ \bibinfo {pages} {022115}
  (\bibinfo {year} {2015})}\BibitemShut {NoStop}%
\bibitem [{\citenamefont {Cong}\ \emph {et~al.}(2017)\citenamefont {Cong},
  \citenamefont {Cai}, \citenamefont {Bancal},\ and\ \citenamefont
  {Scarani}}]{Wan_17}%
  \BibitemOpen
  \bibfield  {author} {\bibinfo {author} {\bibfnamefont {W.}~\bibnamefont
  {Cong}}, \bibinfo {author} {\bibfnamefont {Y.}~\bibnamefont {Cai}}, \bibinfo
  {author} {\bibfnamefont {J.-D.}\ \bibnamefont {Bancal}}, \ and\ \bibinfo
  {author} {\bibfnamefont {V.}~\bibnamefont {Scarani}},\ }\href {\doibase
  10.1103/PhysRevLett.119.080401} {\bibfield  {journal} {\bibinfo  {journal}
  {Phys. Rev. Lett.}\ }\textbf {\bibinfo {volume} {119}},\ \bibinfo {pages}
  {080401} (\bibinfo {year} {2017})}\BibitemShut {NoStop}%
\bibitem [{\citenamefont {Clauser}\ \emph {et~al.}(1969)\citenamefont
  {Clauser}, \citenamefont {Horne}, \citenamefont {Shimony},\ and\
  \citenamefont {Holt}}]{CHSH69}%
  \BibitemOpen
  \bibfield  {author} {\bibinfo {author} {\bibfnamefont {J.~F.}\ \bibnamefont
  {Clauser}}, \bibinfo {author} {\bibfnamefont {M.~A.}\ \bibnamefont {Horne}},
  \bibinfo {author} {\bibfnamefont {A.}~\bibnamefont {Shimony}}, \ and\
  \bibinfo {author} {\bibfnamefont {R.~A.}\ \bibnamefont {Holt}},\ }\href
  {\doibase 10.1103/PhysRevLett.23.880} {\bibfield  {journal} {\bibinfo
  {journal} {Phys. Rev. Lett.}\ }\textbf {\bibinfo {volume} {23}},\ \bibinfo
  {pages} {880} (\bibinfo {year} {1969})}\BibitemShut {NoStop}%
\bibitem [{\citenamefont {Scarani}(2012)}]{scarani_12}%
  \BibitemOpen
  \bibfield  {author} {\bibinfo {author} {\bibfnamefont {V.}~\bibnamefont
  {Scarani}},\ }\href@noop {} {\bibfield  {journal} {\bibinfo  {journal} {Acta
  Physica Slovaca}\ }\textbf {\bibinfo {volume} {62}},\ \bibinfo {pages} {347}
  (\bibinfo {year} {2012})}\BibitemShut {NoStop}%
\bibitem [{\citenamefont {Verstraete}\ and\ \citenamefont
  {Verschelde}(2002)}]{Verstraete02}%
  \BibitemOpen
  \bibfield  {author} {\bibinfo {author} {\bibfnamefont {F.}~\bibnamefont
  {Verstraete}}\ and\ \bibinfo {author} {\bibfnamefont {H.}~\bibnamefont
  {Verschelde}},\ }\href {\doibase 10.1103/PhysRevA.66.022307} {\bibfield
  {journal} {\bibinfo  {journal} {Phys. Rev. A}\ }\textbf {\bibinfo {volume}
  {66}},\ \bibinfo {pages} {022307} (\bibinfo {year} {2002})}\BibitemShut
  {NoStop}%
\end{thebibliography}%

\end{document}